\documentclass[11pt]{article}
\usepackage{geometry}
\geometry{letterpaper}                 
\usepackage{graphicx}
\usepackage{amssymb}
\usepackage{epstopdf}
\usepackage{caption}
\usepackage{hyperref}
\usepackage{multicol}
\usepackage{natbib}
\usepackage{booktabs}
\usepackage{amsthm}
\usepackage[english]{babel}
\usepackage{amssymb}
\usepackage{float}
\usepackage{bbm}
\usepackage{empheq}
\usepackage{amsmath}
\usepackage{lipsum}
\usepackage{docmute}
\usepackage{multirow}
\usepackage{setspace} 
\usepackage{hyperref}
\usepackage{mathtools}
\newtheorem{theorem}{Theorem}
\newtheorem{remark}[theorem]{Remark}
\newtheorem{lemma}[theorem]{Lemma}
\newtheorem{condition}[theorem]{Condition}

\bibliographystyle{abbrvnat} 
\setcitestyle{authoryear,open={(},close={)},citesep={;}}

% Keywords command
\providecommand{\keywords}[1]
{
  \small	
  \textbf{\textit{Keywords---}} #1
}

\title{Inference for Delay Differential Equations Using Manifold-Constrained Gaussian Processes}

\author{Yuxuan Zhao and Samuel W.K. Wong\footnote{E-mail for correspondence: samuel.wong@uwaterloo.ca} \\
Department of Statistics and Actuarial Science, University of Waterloo, \\Waterloo, N2L 3G1, Canada.}

\begin{document}
\setstretch{1.25}

\maketitle

\begin{abstract}   
Dynamic systems described by differential equations often involve feedback among system components. When there are time delays for components to sense and respond to feedback, delay differential equation (DDE) models are commonly used. This paper considers the problem of inferring unknown system parameters, including the time delays, from noisy and sparse experimental data observed from the system. We propose an extension of manifold-constrained Gaussian processes to conduct parameter inference for DDEs, whereas the time delay parameters have posed a challenge for existing methods that bypass numerical solvers. Our method uses a Bayesian framework to impose a Gaussian process model over the system trajectory, conditioned on the manifold constraint that satisfies the DDEs. For efficient computation, a linear interpolation scheme is developed to approximate the values of the time-delayed system outputs, along with corresponding theoretical error bounds on the approximated derivatives. Two simulation examples, based on Hutchinson's equation and the \textit{lac} operon system, together with a real-world application using Ontario COVID-19 data, are used to illustrate the efficacy of our method. \\
\end{abstract}

\keywords{Bayesian inference; Dynamic systems; Hutchinson's equation; \textit{Lac} operon; Parameter estimation.}

\section{Introduction}

Delay differential equations (DDEs) are commonly used to model dynamic processes with time delays in ecology \citep{Hutchinson1948CIRCULARCS}, biology \citep{yildirim2003feedback}, and epidemiology \citep{ma2004global}. In this setting, components of the system require time to respond to feedback, thereby introducing time delay parameters into the model. For example, when modeling disease transmission mechanisms, the system of DDEs could incorporate constant time delays to accommodate the incubation period of the infectious disease \citep{ma2004global}. In models for gene regulation, the time required for transcription and translation activities could be encoded via time delay parameters
\citep{yildirim2003feedback}.

Our focus lies in models described by a set of DDEs,
\begin{equation}
dx_i(t)/dt = f_i (\boldsymbol{x}(t),\boldsymbol{x}(t-\boldsymbol{\tau}),\boldsymbol{\theta}, t), \quad t\in [0,T], i\in \{1,\cdots,m\},    
\label{eq:dde}
\end{equation}%
where $m$ is the number of system components and the vector $\boldsymbol{x}(t) = (x_1(t),x_2(t),\cdots, x_m(t))$ denotes the system output at time $t$. Here, we let $\boldsymbol{\tau} = (\tau_1,\tau_2,\cdots, \tau_m)$ denote the vector of time delay parameters for the system components (with $\tau_i = 0$ if the $i$-th component does not involve a time delay) and $\boldsymbol{\theta}$ denote the $l$-dimensional vector of other unknown model parameters. The term $\boldsymbol{x}(t-\boldsymbol{\tau}) = (x_1(t-\tau_{1}),x_2(t-\tau_{2}),\cdots, x_m(t-\tau_{m}))$ refers to the time-delayed system output at time $t-\boldsymbol{\tau}$, or \textit{historical output} for short. In many applications, the $m$ DDEs share the same vector of time delay parameters \citep{glass2021nonlinear}, which we write as $\boldsymbol{\tau}$ for simplicity of notation throughout this paper; if there are equation-specific time delays, we may instead write $\boldsymbol{\tau}_{i} = (\tau_{i,1},\tau_{i,2}, \ldots, \tau_{i,m})$ to denote the vector of time delay parameters in the $i$-th equation. 

In contrast to ordinary differential equations (ODEs), DDEs require information about the history of the system dynamics: we let $\mathcal{H}_{\tau_i} = \{x_i(t), t\in [-\tau_i,0]\}$ denote the values of the $i$-th system component prior to $t= 0$. The widely-used history function $\mathcal{H}_{\tau_i} = \{x_i(t) = x_i(0), t\in [-\tau_i,0]\}$ is assumed throughout this paper, which sets $x_i(t-\tau_i)= x_i(0)$ for all $t \le \tau_i$ \citep{yildirim2003feedback,ma2004global,bihorel2011scarabee,wang2012estimating,wang2022adaptive}.
The functions $f_i(\cdot):\mathbb{R}^m\times \mathbb{R}^m\times \mathbb{R}^l\times \mathbb{R}\to \mathbb{R}$ are treated as known from the scientific context. Due to experimental limitations, data observed from the system are usually noisy and collected at a discrete set of time points $\boldsymbol{\gamma}$. Letting $\boldsymbol{y}(\boldsymbol{\gamma})$ denote the observations, we assume $\boldsymbol{y}(\boldsymbol{\gamma}) = \boldsymbol{x}(\boldsymbol{\gamma}) +\boldsymbol{\epsilon}(\boldsymbol{\gamma})$, where $\boldsymbol{\epsilon}(\boldsymbol{\gamma})$ is Gaussian noise. The goal is to infer the unknown parameters in $\boldsymbol{\theta}$ and $\boldsymbol{\tau}$ given the observations $\boldsymbol{y}(\boldsymbol{\gamma})$. 

Inferring parameters in differential equations (DEs) can be a difficult task, as nonlinear systems tend to lack analytic solutions. Thus, the first and oldest general category of methods for the parameter inference problem uses (deterministic) numerical solvers. \citet{horbelt2002parameter} minimized the weighted sum of squares between the observed data and the model trajectories with respect to the unknown parameters; this nonlinear least squares approach does not require distributional assumptions. A likelihood-based approach, which estimates the parameters by maximizing the likelihood function of the observed data given the numerical solution, can also be used \citep{bihorel2011scarabee}. 
A corresponding Bayesian approach imposes priors on the parameters and can use Markov chain Monte Carlo (MCMC) samplers for inference \citep{boersch2017debinfer}.
While these numerical solver-based methods are better known in the context of ODEs, they are also generally applicable to DDE parameter inference with the help of a numerical solver that supports DDEs \citep{bellen2013numerical}. Even in the ODE setting, the repeated use of numerical solvers required for parameter estimation incurs a high computational cost.  
Numerically solving DDE systems further requires an appropriate interpolation scheme for the historical output; moreover, the historical output can reduce the stability of the system and introduce additional oscillations, so that the solver requires smaller discretization time steps.
Even a system that is stable when modeled with ODEs can become unstable with the addition of a delay \citep{erneux2009applied}. Therefore, numerical solvers for DDEs tend to be even more expensive than for ODEs.

To bypass numerical solvers, the class of collocation methods approximates the DE solution via basis function expansion \citep{varah1982spline,ramsay2007parameter,wang2012estimating,wang2022adaptive}. Introducing this idea for ODEs, \citet{varah1982spline} used splines to first smooth the noisy data and then in a second step applied least squares on the difference between the spline derivative and the ODEs; doing so requires a relatively accurate estimate of $\boldsymbol{x}(t)$ in the first step. \citet{ellner1997inferring} applied similar strategies for the case of DDEs. 
\citet{ramsay2007parameter} pioneered a generalized profiling procedure for ODEs, wherein the model parameters and spline coefficients are optimized together using a penalized likelihood, to achieve a balance between data fitting and fidelity to the ODE. 
Extending this approach to a semiparametric method for inference of DDEs, \citet{wang2012estimating} used cross-validation to choose the 
smoothing parameter in the penalized likelihood and noted that a denser set of knots may be needed to handle sharp changes and oscillations in the DDE solution. 
Subsequently, \citet{wang2022adaptive} proposed a more comprehensive Bayesian approach to collocation for ODE and DDE models, and developed an annealed sequential Monte Carlo (SMC) algorithm to draw posterior samples. By treating the smoothing parameter as part of the posterior distribution, the method of \citet{wang2022adaptive} avoids the potential drawbacks of over-fitting and expensive computation associated with cross-validation, and might be regarded as the state-of-the-art collocation-based inference method for DDEs.

Within a Bayesian framework, Gaussian processes (GPs) can serve as an alternative to collocation methods. For ODE inference, GP-based methods begin by placing a GP prior with hyper-parameters $\phi$ on $\boldsymbol{x}(t)$, which gives a closed form of the conditional distribution of $\boldsymbol{x}'(t)$ given $\boldsymbol{x}(t)$ to help bypass numerical solvers \citep{calderhead2008accelerating,dondelinger2013ode,wenk2019fast,yang2021inference}.  The earliest GP-based method used gradient matching together with heuristics to combine the GP and ODE specifications \citep{calderhead2008accelerating}; an application to DDEs was also considered, but in practice would require numerical solvers for the delay parameters. Subsequent refinements to GP gradient matching \citep[e.g.,][]{dondelinger2013ode,wenk2019fast} nonetheless did not resolve the incompatible specification of $\boldsymbol{x}'(t)$ between the GP (via $p(\boldsymbol{x}'(t)|\boldsymbol{x}(t),\phi)$) and the ODEs (via the functions $f_i$).
To address this incompatibility, \citet{yang2021inference} proposed the  manifold-constrained Gaussian process inference (MAGI) method for ODEs. In contrast to gradient matching, MAGI explicitly conditions $\boldsymbol{x}(t)$ on the manifold that satisfies the ODEs, thereby providing a coherent Bayesian framework for inference. MAGI demonstrated accurate parameter estimation and fast computation speed on sparse and noisy data; \citet{wong2023estimating} further showcased the promising performance of MAGI on a larger ODE model with 10 components and 16 unknown parameters. 
It is however challenging to incorporate historical outputs for DDEs into GP-based methods, since system outputs at a non-fixed set of time points are required (as $\boldsymbol{\tau}$ is random), which is computationally demanding due to the covariance structure of GPs. Historical outputs pose a similar challenge for probabilistic ODE solvers that employ fixed time steps \citep{tronarp2022fenrir,wu2024data}. To the best of our knowledge, the use of GPs to facilitate parameter inference for DDEs, without any numerical solvers, remains to be explored.

Therefore, as the main contribution of this paper, we develop methodology that extends the MAGI GP-based framework to DDEs, given its principled Bayesian construction and favorable performance on ODEs.
The key challenge is to incorporate historical outputs of DDEs into the GP manifold constraints, while achieving fast computation speed and estimation accuracy. We tackle this challenge by employing a linear interpolation scheme that only involves sparse matrix computations and simplifies the formulation of the manifold constraints. Theoretical error bounds for the resulting $\boldsymbol{x}'(t)$ along with numerical validations are provided to justify the approximation. Our proposed method, MAnifold-constrained Gaussian process inference for delay differential equations (MAGIDDE), is applied to simulated and real data examples. We obtain fast and accurate inference for the parameters and system trajectories, compared to other representative methods for DDE inference.

\section{Review of the MAGI method for ODE inference}
\label{sec:background}

In this section, we review the MAGI method \citep{yang2021inference} as applied to parameter inference for ODEs. In the ODE setting, Equation \eqref{eq:dde} reduces to $dx_i(t)/dt = f_i (\boldsymbol{x}(t),\boldsymbol{\theta}, t)$, $i = 1, \ldots, m$, or $d\boldsymbol{x}(t)/dt = \mathbf{f} (\boldsymbol{x}(t),\boldsymbol{\theta}, t)$ in vector form. MAGI imposes an independent GP prior on each component $x_i(t)$ such that \begin{equation}x_i(t)\sim \mathcal{GP}(\mu_i, \mathcal{K}_i), \quad t\in[0, T],
\label{eqn:x_prior}\end{equation} where $\mathcal{K}_i:\mathbb{R}\times \mathbb{R}\to \mathbb{R}$ is a positive definite covariance kernel and $\mu_i:\mathbb{R} \to \mathbb{R}$ is the mean function, usually taken to be $\mu_i(t) \equiv 0$. Let $\pi(\cdot)$ generically denote the prior on the model parameters $\boldsymbol{\theta}$.  

The data consist of the noisy observations $\boldsymbol{y}(\boldsymbol{\gamma}) = \left(\boldsymbol{y}_1(\boldsymbol{\gamma}_1),\cdots, \boldsymbol{y}_m(\boldsymbol{\gamma}_m)\right)$, where $\boldsymbol{\gamma} = (\boldsymbol{\gamma}_1,\cdots, \boldsymbol{\gamma}_m)$ denotes the collection of observation time points for each component, i.e., component $i$ is observed at the $N_i$ time points $\boldsymbol{\gamma}_i = (\gamma_{i,1},\cdots \gamma_{i,N_i})$. Assume that the observed data $\boldsymbol{y}_i(\boldsymbol{\gamma}_i)$ are subject to additive Gaussian noise, i.e., \begin{equation}\boldsymbol{y}_i(\boldsymbol{\gamma}_i) = \boldsymbol{x}_i(\boldsymbol{\gamma}_i)+\boldsymbol{\epsilon}_i(\boldsymbol{\gamma}_i), \quad \boldsymbol{\epsilon}_i(\boldsymbol{\gamma}_i)\sim N(\mathbf{0}, \sigma_i^2 \mathbf{I}_{N_i}),
    \label{eqn:obs_data}
\end{equation}
where $\mathbf{I}_{N_i}$ is an $N_i\times N_i$ dimensional identity matrix. (The notation $t$ is used to represent time generically, while $\boldsymbol{\gamma}$ denotes the observation time points.)  

With a suitably chosen $\mathcal{K}_i$, the conditional distribution of $\boldsymbol{x}'(t)$ given $\boldsymbol{x}(t)$ is also a GP with a fully specified mean function and a covariance kernel. To link this GP-specified distribution of $\boldsymbol{x}'(t)$ with the ODE model structure, define $W$ as a random variable measuring the uniform deviation between the stochastic process and ODE, i.e.,$$
    W = \sup_{t\in[0,T], i\in\{1,\cdots, m\}} \big|x'_i(t)-f_i(\boldsymbol{x}(t),  \boldsymbol{\theta},t)\big|.
$$
Setting $W=0$ thus constrains the GP to lie on the manifold that satisfies the ODEs. In practice, $W$ needs to be approximated by taking the uniform deviation over a finite set of $n$ discretization points $\boldsymbol{I} = \{t_1,\cdots, t_n\}$, so we define$$
    W_{\boldsymbol{I}} = \max_{t\in\boldsymbol{I}, i\in\{1,\cdots, m\}} \big|x'_i(t)-f_i\left(\boldsymbol{x}(t), \boldsymbol{\theta},t\right)\big|,
$$
with $\boldsymbol{\gamma}\subset \boldsymbol{I}\subset [0,T]$. The key idea is to then condition the GP on the manifold constraint $W=0$, as approximated by setting $W_{\boldsymbol{I}} = 0$. Following a Bayesian framework, the computable joint posterior of $\boldsymbol{\theta}$ and $\boldsymbol{x}(\boldsymbol{I})$ conditional on $W_{\boldsymbol{I}} = 0$ and the noisy measurements $\boldsymbol{y}(\boldsymbol{\gamma})$ is given by \begin{equation}
    \begin{aligned}
    &p\left(\boldsymbol{\theta}, \boldsymbol{x}(\boldsymbol{I})| W_{\boldsymbol{I}} = 0, \boldsymbol{y}(\boldsymbol{\gamma})\right)
    \propto p\left(\boldsymbol{\theta}, \boldsymbol{x}(\boldsymbol{I}), W_{\boldsymbol{I}} = 0,\boldsymbol{y}(\boldsymbol{\gamma})\right)\\
    =&\pi(\boldsymbol{\theta}) \times p\left(\boldsymbol{x}(\boldsymbol{I})\right) \times p\left(\boldsymbol{y}(\boldsymbol{\gamma})| \boldsymbol{x}(\boldsymbol{I})\right)\times p\left(\boldsymbol{x}'(\boldsymbol{I})=\mathbf{f}(\boldsymbol{x}(\boldsymbol{I}), \boldsymbol{\theta}, \boldsymbol{I}) |\boldsymbol{x}(\boldsymbol{I})\right).
\end{aligned} 
\label{eqn:posterior}
\end{equation}
A brief description of these four terms follows, with details of the closed forms for each term in Equation \eqref{eqn:posterior} provided in Appendix A. The prior density of the model parameters is $\pi(\boldsymbol{\theta})$. As any finite collection of random variables from the GP $\boldsymbol{x}(t)$ defined in Equation \eqref{eqn:x_prior} follows a multivariate normal distribution, $p\left(\boldsymbol{x}(\boldsymbol{I})\right)$ is multivariate normal. According to Equation \eqref{eqn:obs_data}, $p\left(\boldsymbol{y}(\boldsymbol{\gamma})| \boldsymbol{x}(\boldsymbol{I})\right)$ is the normal likelihood of the noisy observations. The last term evaluates the GP $\boldsymbol{x}'(\boldsymbol{I})$ at $\boldsymbol{x}'(\boldsymbol{I}) = \mathbf{f}(\boldsymbol{x}(\boldsymbol{I}), \boldsymbol{\theta}, \boldsymbol{I})$ to satisfy $W_{\boldsymbol{I}} = 0$, and hence is also multivariate normal.

Hamiltonian Monte Carlo \citep[HMC,][]{neal2011mcmc} is used to draw samples of $\boldsymbol{\theta}$ and $\boldsymbol{x}(\boldsymbol{I})$ from Equation \eqref{eqn:posterior}. Denser discretization sets $\boldsymbol{I}$ may provide more accurate inference as the manifold constraint is better approximated, at the cost of computation time. As a practical guideline for the choice of $\boldsymbol{I}$, \citet{wong2022magi} suggests starting with the smallest equally-spaced set $\boldsymbol{I}_0$ that contains the observation time points $\boldsymbol{\gamma}$. Then, we can construct $\boldsymbol{I}_j \supset \boldsymbol{I}_{j-1}$, $j \ge 1$ by inserting one equally-spaced point between each adjacent pair of points in $\boldsymbol{I}_j$, and stop when stable estimates are obtained (i.e., the credible intervals based on $\boldsymbol{I}_j$ and $\boldsymbol{I}_{j-1}$ largely overlap). \citet{yang2021inference} notes that as the cardinality $|\boldsymbol{I}|$ increases, the terms involving the GP prior in Equation \eqref{eqn:posterior}, namely $p\left(\boldsymbol{x}(\boldsymbol{I})\right)$ and $p\left(\boldsymbol{x}'(\boldsymbol{I}) = \mathbf{f}(\boldsymbol{x}(\boldsymbol{I}), \boldsymbol{\theta}, \boldsymbol{I}) |\boldsymbol{x}(\boldsymbol{I})\right)$, will become more dominant, while the likelihood remains unchanged, i.e., only the points in the observation set $\boldsymbol{\gamma}$ contribute to the likelihood term $p \left(\boldsymbol{y}(\boldsymbol{\gamma})| \boldsymbol{x}(\boldsymbol{I})\right)$.  To achieve a balance between the likelihood and GP prior, a tempering hyper-parameter $\beta = m|\boldsymbol{I}|/\sum_{i=1}^m|\boldsymbol{\gamma}_i|$ is introduced, i.e., the ratio between the total number of discretization points and the total number of observations. The GP-related terms are then tempered as $\left[p\left(\boldsymbol{x}(\boldsymbol{I})\right)p\left(\boldsymbol{x}'(\boldsymbol{I}) =\mathbf{f}(\boldsymbol{x}(\boldsymbol{I}), \boldsymbol{\theta}, \boldsymbol{I}) |\boldsymbol{x}(\boldsymbol{I})\right) \right]^{1/{\beta}}$. Finally, based on the HMC samples, the posterior means of $\boldsymbol{\theta}$ and $\boldsymbol{x}(\boldsymbol{I})$ are treated as the estimates of the parameters and system trajectories, respectively, and credible intervals can be obtained.

\section{Methodology}
\label{sec:method}
Our main contribution is the MAGIDDE method which provides fast and accurate inference for DDEs, built upon the MAGI inference framework. To fix ideas, we begin by considering a fully Bayesian construction that incorporates the time-delay parameters $\boldsymbol{\tau}$ and historical outputs $\boldsymbol{x}(t-\boldsymbol{\tau})$ into the posterior distribution. A key challenge that becomes apparent is that the historical outputs are not directly available and require a computationally intensive sampling step. 
To address this challenge, we consider two approximation schemes and derive their corresponding theoretical error bounds.  Our practical implementation of MAGIDDE uses a linear interpolation scheme for the historical outputs, which results in a computationally efficient method that maintains estimation accuracy.

\subsection{Fully Bayesian Scheme}

We first consider a fully Bayesian scheme that extends MAGI to the DDE setting. To incorporate the time-delay parameters, we place a general prior $\pi(\cdot)$ on $\boldsymbol{\tau}$. The full posterior distribution then consists of $\boldsymbol{\theta},\boldsymbol{\tau}, \boldsymbol{x}(\boldsymbol{I}),\boldsymbol{x}(\boldsymbol{I}-\boldsymbol{\tau})$ and can be written as 
\begin{equation}
\begin{aligned}
    &p\left(\boldsymbol{\theta},\boldsymbol{\tau}, \boldsymbol{x}(\boldsymbol{I}),\boldsymbol{x}(\boldsymbol{I}-\boldsymbol{\tau})| W_{\boldsymbol{I}} = 0, \boldsymbol{y}(\boldsymbol{\gamma})\right)
   \propto  p\left(\boldsymbol{\theta}, \boldsymbol{\tau}, \boldsymbol{x}(\boldsymbol{I}),\boldsymbol{x}(\boldsymbol{I}-\boldsymbol{\tau}), W_{\boldsymbol{I}} = 0, \boldsymbol{y}(\boldsymbol{\gamma})\right),\\
\end{aligned} 
\label{eqn:posterior_dde}
\end{equation}
where $\boldsymbol{x}(\boldsymbol{I}-\boldsymbol{\tau}) = (\boldsymbol{x}_1(\boldsymbol{I}-\tau_1),\cdots, \boldsymbol{x}_m(\boldsymbol{I}-\tau_m))$, and each $\boldsymbol{x}_i(\boldsymbol{I}-\tau_i) = \left(x_i(t_1-\tau_i),\cdots, x_i(t_n-\tau_i)\right)^\top$. 

Factorizing Equation \eqref{eqn:posterior_dde} yields
$$
    \begin{aligned}
        &p\left( \boldsymbol{\theta}, \boldsymbol{\tau},\boldsymbol{x}(\boldsymbol{I}),\boldsymbol{x}(\boldsymbol{I}-\boldsymbol{\tau}), W_{\boldsymbol{I}} = 0, \boldsymbol{y}(\boldsymbol{\gamma})\right)\\
       =&\underbrace{\pi(\boldsymbol{\theta})\times {\pi(\boldsymbol{\tau})}}_{(1)} 
\times  \underbrace{p\left(\boldsymbol{x}(\boldsymbol{I}-\boldsymbol{\tau}),\boldsymbol{x}(\boldsymbol{I})|\boldsymbol{\theta},\boldsymbol{\tau} \right)}_{(2)}
\times   \underbrace{p\left( \boldsymbol{y}(\boldsymbol{\gamma})|\boldsymbol{x}(\boldsymbol{I}-\boldsymbol{\tau}),\boldsymbol{x}(\boldsymbol{I}),\boldsymbol{\theta},\boldsymbol{\tau}\right)}_{(3)}\\
\times & \underbrace{p\left(W_{\boldsymbol{I}} = 0|\boldsymbol{y}(\boldsymbol{\gamma}), \boldsymbol{x}(\boldsymbol{I}-\boldsymbol{\tau}),\boldsymbol{x}(\boldsymbol{I}),\boldsymbol{\theta},\boldsymbol{\tau}\right)}_{(4)}.\\
    \end{aligned}
$$ 
Due to the prior independence between the GP and parameters in $\boldsymbol{\theta}$, the second term
simplifies as $p\left(\boldsymbol{x}(\boldsymbol{I}),\boldsymbol{x}(\boldsymbol{I}-\boldsymbol{\tau})|\boldsymbol{\tau} \right)$.
Likewise, the likelihood of the noisy observations at time points $\boldsymbol{\gamma}$ does not depend on $\boldsymbol{\theta}$, $\boldsymbol{\tau}$, and $\boldsymbol{x}(\boldsymbol{I}-\boldsymbol{\tau})$, 
so the third term simplifies to $p\left(\boldsymbol{y}(\boldsymbol{\gamma})|\boldsymbol{x}(\boldsymbol{I})\right)$. For the fourth term, after substituting the definition of $W_{\boldsymbol{I}}=0$, the resulting density of $\boldsymbol{x}'(\boldsymbol{I})$ 
 depends on $\boldsymbol{x}(\boldsymbol{I})$, $\boldsymbol{x}(\boldsymbol{I}-\boldsymbol{\tau}),$ and $\boldsymbol{\tau}$. 
Hence, we obtain
$$
    \begin{aligned}
        p\left(\boldsymbol{\theta}, \boldsymbol{\tau}, \boldsymbol{x}(\boldsymbol{I}),\boldsymbol{x}(\boldsymbol{I}-\boldsymbol{\tau}), W_{\boldsymbol{I}} = 0,\boldsymbol{y}(\boldsymbol{\gamma})\right)
        =&\underbrace{\pi(\boldsymbol{\theta})\times {\pi(\boldsymbol{\tau})}}_{(1)} 
    \times  \underbrace{p\left(\boldsymbol{x}(\boldsymbol{I}-\boldsymbol{\tau}),\boldsymbol{x}(\boldsymbol{I})|\boldsymbol{\tau} \right)}_{(2)}
        \times   \underbrace{p\left( \boldsymbol{y}(\boldsymbol{\gamma})|\boldsymbol{x}(\boldsymbol{I})\right)}_{(3)}\\
         \times  &\underbrace{p\left(\boldsymbol{x}'(\boldsymbol{I})=\mathbf{f}(\boldsymbol{x}(\boldsymbol{I}),\boldsymbol{x}(\boldsymbol{I}-\boldsymbol{\tau}), \boldsymbol{\theta}, \boldsymbol{I})|\boldsymbol{x}(\boldsymbol{I}-\boldsymbol{\tau}),\boldsymbol{x}(\boldsymbol{I}),\boldsymbol{\tau})\right)}_{(4)}.\\
    \end{aligned}
  $$

The first term is the prior density of the parameters $\boldsymbol{\theta}$ and $\boldsymbol{\tau}$. 
For the second term, the joint distribution of $\boldsymbol{x}_i(\boldsymbol{I}-\tau_i)$ and $\boldsymbol{x}_i(\boldsymbol{I})$ given $\tau_i$ for the $i-$th component is multivariate normal from the GP prior for $x_i(t)$,
i.e.,
\begin{equation} 
\begin{aligned}
    \boldsymbol{x}_i(\boldsymbol{I}-\tau_i),\boldsymbol{x}_i(\boldsymbol{I}) | \tau_i \sim N&\left(\begin{bsmallmatrix}
        \boldsymbol{\mu}_i(\boldsymbol{I}-\tau_i)\\
         \boldsymbol{\mu}_i(\boldsymbol{I})
    \end{bsmallmatrix}, \begin{bsmallmatrix}
\mathcal{K}_i(\boldsymbol{I}-\tau_i,\boldsymbol{I}-\tau_i) & 
  \mathcal{K}_i(\boldsymbol{I}-\tau_i,\boldsymbol{I}) \\
  \mathcal{K}_i(\boldsymbol{I},\boldsymbol{I}-\tau_i)&
\mathcal{K}_i(\boldsymbol{I},\boldsymbol{I})
\end{bsmallmatrix}\right),
    \label{eqn:conditional_dist}
\end{aligned}
\end{equation}
where $\boldsymbol{\mu}(\boldsymbol{I}-\tau_i)$,  $\mathcal{K}_i(\boldsymbol{I}-\tau_i,\boldsymbol{I}-\tau_i)$ 
and $\mathcal{K}_i(\boldsymbol{I},\boldsymbol{I}-\tau_i)$ respectively represent the mean function at $\boldsymbol{I}-\tau_i$, covariance matrix of $\boldsymbol{x}_i(\boldsymbol{I}-\tau_i)$ and cross-covariance matrix between $\boldsymbol{x}_i(\boldsymbol{I})$ and $\boldsymbol{x}_i(\boldsymbol{I}-\tau_i)$. 
The third term is the normal likelihood of the noisy observations. 
The fourth term evaluates the density of $\boldsymbol{x}'(\boldsymbol{I})$ at $\mathbf{f}(\boldsymbol{x}(\boldsymbol{I}),\boldsymbol{x}(\boldsymbol{I}-\boldsymbol{\tau}), \boldsymbol{\theta}, \boldsymbol{I})$, and by the property of GPs $\boldsymbol{x}'(\boldsymbol{I})|\boldsymbol{x}(\boldsymbol{I}-\boldsymbol{\tau}),\boldsymbol{x}(\boldsymbol{I}),\boldsymbol{\tau}$ has a multivariate normal distribution, provided the covariance kernel $\mathcal{K}$ is
associated with twice-differentiable curves, i.e.,
\begin{equation} 
\boldsymbol{x}_i'(\boldsymbol{I})|\boldsymbol{x}_i(\boldsymbol{I}-\tau_i),\boldsymbol{x}_i(\boldsymbol{I}),\tau_i \sim N\left(\boldsymbol{\mu}'_i(\boldsymbol{I})+\boldsymbol{m}_i\left(\begin{bsmallmatrix} \boldsymbol{x}_i(\boldsymbol{I}-\tau_i)\\
         \boldsymbol{x}_i(\boldsymbol{I})
    \end{bsmallmatrix}-\begin{bsmallmatrix}
        \boldsymbol{\mu}_i(\boldsymbol{I}-\tau_i)\\
         \boldsymbol{\mu}_i(\boldsymbol{I})
    \end{bsmallmatrix} \right), \boldsymbol{\zeta}_i\right), \label{eqn:xprime_fully}
\end{equation} 
where $\boldsymbol{m}_i  =\begin{bsmallmatrix}{ }^{\prime}\mathcal{K}_i(\boldsymbol{I}, \boldsymbol{I}-{\tau_i})& { }^{\prime}\mathcal{K}_i(\boldsymbol{I}, \boldsymbol{I})
\end{bsmallmatrix} \begin{bsmallmatrix}
\mathcal{K}_i(\boldsymbol{I}-\tau_i,\boldsymbol{I}-\tau_i) & 
  \mathcal{K}_i(\boldsymbol{I}-\tau_i,\boldsymbol{I}) \\
  \mathcal{K}_i(\boldsymbol{I},\boldsymbol{I}-\tau_i)&
\mathcal{K}_i(\boldsymbol{I},\boldsymbol{I})
\end{bsmallmatrix}^{-1}$, and $\boldsymbol{\zeta}_i  =\mathcal{K}_i^{\prime \prime}(\boldsymbol{I}, \boldsymbol{I})-
\boldsymbol{m}_i \begin{bsmallmatrix} \mathcal{K}_i'( \boldsymbol{I}-{\tau_i},\boldsymbol{I})\\ \mathcal{K}_i'(\boldsymbol{I}, \boldsymbol{I})\end{bsmallmatrix}$
with ${ }^{\prime} \mathcal{K}_i=\frac{\partial}{\partial s} \mathcal{K}_i(s, t), \mathcal{K}^{\prime}_i=\frac{\partial}{\partial t} \mathcal{K}_i(s, t) \text {, and } \mathcal{K}^{\prime \prime}_i=\frac{\partial^2}{\partial s \partial t} \mathcal{K}_i(s, t)$ for any two time points $s$ and $t$.

In practice, we suggest using the Matern class where the covariance of the $i$-th component between time points $s$ and $t$ is given by $\mathcal{K}_i(s, t)=\phi_{i,1} \frac{2^{1-\nu}}{\Gamma(\nu)}\left(\sqrt{2 \nu} \frac{d}{\phi_{i,2}}\right)^\nu B_\nu\left(\sqrt{2 \nu} \frac{d}{\phi_{i,2}}\right),$ where $ d=|s-t|$, $\Gamma$ is the Gamma function, $B_\nu$ is the modified Bessel function of the second kind, and $\nu$ is the degree of freedom. Reasonable values are $\nu = 2.01$ or $2.5$ to satisfy the requirement of twice-differentiable curves; $\nu=2.01$ is a good default choice that is suitable for rougher curves, while $\nu=2.5$ is adequate for smoother curves and has faster computation speed \citep{wong2022magi}. $\mathcal{K}_i(s,t)$ has two hyper-parameters, $\phi_{i,1}$ and $\phi_{i,2}$, that respectively control the overall variance and bandwidth of the $i$-th component. 

However, sampling the historical outputs from the conditional distribution in Equation \eqref{eqn:conditional_dist} (and subsequently $\boldsymbol{x}_i'(\boldsymbol{I})$ from Equation \eqref{eqn:xprime_fully}) is a computational bottleneck, as recalculation of cross-covariances and the conditional covariance matrix $\boldsymbol{\zeta}_i$ is required each time the value of $\tau_i$ is updated.
Thus in what follows, we describe two schemes for approximating $\boldsymbol{x}(\boldsymbol{I}-\boldsymbol{\tau})$: conditional expectation and linear interpolation. For both schemes, we construct the approximation and derive theoretical error bounds for the derivatives of the system outputs. 

\subsection{Approximation of Historical Outputs}
\label{sec:Approximation of historical outputs}

To reduce the computational cost of the fully Bayesian scheme, we consider deterministic approximations for $\boldsymbol{x}(\boldsymbol{I}-\boldsymbol{\tau})$ in terms of ${\boldsymbol{x}}(\boldsymbol{I})$ and $\boldsymbol{\tau}$. Let $\hat{\boldsymbol{x}}(\boldsymbol{I}-\boldsymbol{\tau})$ denote the approximated values of the historical outputs. By dropping the explicit dependence on $\boldsymbol{x}(\boldsymbol{I}-\boldsymbol{\tau})$ in Equation \eqref{eqn:posterior_dde} and treating it as known, the factorization of the joint posterior simplifies to 
$$\begin{aligned}
        p\left(\boldsymbol{\theta},\boldsymbol{\tau}, \boldsymbol{x}(\boldsymbol{I})| W_{\boldsymbol{I}} = 0, \boldsymbol{y}(\boldsymbol{\gamma})\right)
       \propto & p\left(\boldsymbol{\theta}, \boldsymbol{\tau}, \boldsymbol{x}(\boldsymbol{I}), W_{\boldsymbol{I}} = 0,\boldsymbol{y}(\boldsymbol{\gamma})\right)\\
       =&\pi(\boldsymbol{\theta})\times {\pi(\boldsymbol{\tau})}
       \times  p\left(\boldsymbol{x}(\boldsymbol{I})|\boldsymbol{\theta},\boldsymbol{\tau}\right)
       \times   p\left(\boldsymbol{y}(\boldsymbol{\gamma})|\boldsymbol{x}(\boldsymbol{I}),\boldsymbol{\theta},\boldsymbol{\tau}\right)\\
        \times & p\left(\boldsymbol{x}'(\boldsymbol{I}) =\mathbf{f}(\boldsymbol{x}(\boldsymbol{I}),\boldsymbol{x}(\boldsymbol{I}-\boldsymbol{\tau}), \boldsymbol{\theta}, \boldsymbol{I})|\boldsymbol{y}(\boldsymbol{\gamma}), \boldsymbol{x}(\boldsymbol{I}),\boldsymbol{\theta},\boldsymbol{\tau}\right).\\
    \end{aligned} 
    $$
Then, using the previous conditional independence properties and substituting the approximation $\hat{\boldsymbol{x}}(\boldsymbol{I}-\boldsymbol{\tau})$ for $\boldsymbol{x}(\boldsymbol{I}-\boldsymbol{\tau})$ yields
\begin{equation}\begin{aligned}
    p\left(\boldsymbol{\theta},\boldsymbol{\tau}, \boldsymbol{x}(\boldsymbol{I})| W_{\boldsymbol{I}} = 0, \boldsymbol{y}(\boldsymbol{\gamma})\right)
   \propto & \pi(\boldsymbol{\theta})\times {\pi(\boldsymbol{\tau})}
   \times  p\left(\boldsymbol{x}(\boldsymbol{I})\right)
   \times   p\left( \boldsymbol{y}(\boldsymbol{\gamma})| \boldsymbol{x}(\boldsymbol{I})\right)\\
   \times &  p\left(\boldsymbol{x}'(\boldsymbol{I}) =\mathbf{f}(\boldsymbol{x}(\boldsymbol{I}),\hat{\boldsymbol{x}}(\boldsymbol{I}-\boldsymbol{\tau}), \boldsymbol{\theta}, \boldsymbol{I})|\boldsymbol{x}(\boldsymbol{I}), \boldsymbol{\tau}\right).\\
\end{aligned} 
\label{eqn:posterior_approximate}\end{equation}

A first approximation scheme for $\boldsymbol{x}(\boldsymbol{I}-\boldsymbol{\tau})$ is via conditional expectation. Specifically, $E\left[\boldsymbol{x}_i(\boldsymbol{I}-\tau_i)|\boldsymbol{x}_i(\boldsymbol{I})\right]$ stands as a natural candidate to approximate  $\boldsymbol{x}_i(\boldsymbol{I}-\tau_i)$, since it is the best linear predictor and has a closed-form expression \citep{wang2020prediction}. Thus, we have the approximation
\begin{equation}\hat{\boldsymbol{x}}_i(\boldsymbol{I}-\tau_i)=E\left[\boldsymbol{x}_i(\boldsymbol{I}-\tau_i)|\boldsymbol{x}_i(\boldsymbol{I})\right]=\boldsymbol{\mu}_i(\boldsymbol{I}-\tau_i) + \mathcal{K}_i(\boldsymbol{I},\boldsymbol{I}-\tau_i)\mathcal{K}_i(\boldsymbol{I},\boldsymbol{I})^{-1}(\boldsymbol{x}_i(\boldsymbol{I})-\boldsymbol{\mu}_i(\boldsymbol{I})).
    \label{eqn:conditional_expectation_approximation}
\end{equation}

A second approximation scheme for $\boldsymbol{x}(\boldsymbol{I}-\boldsymbol{\tau})$ uses linear interpolation. Define the interpolated weight $w(t) = \sum_{j=1}^{n-1}\mathbbm{1}(t\in [t_j, t_{j+1}])\cdot \frac{t-t_j}{t_{j+1}-t_j}$, where $\mathbbm{1}(\cdot)$ is the indicator function. Then a linearly interpolated value of the $i$-th component at any $t \in \{\boldsymbol{I} - \tau_i \}$ 
is given by  
\begin{equation}
\hat{x}_i(t) = \sum_{j=1}^{n-1}\mathbbm{1}(t\in [t_j, t_{j+1}])\cdot\left[(1-w(t))\cdot x_i(t_j)+w(t)\cdot x_i(t_{j+1})\right]
\label{eqn:linear_interpolation_approximation}.
\end{equation}
Equation \eqref{eqn:linear_interpolation_approximation} can be written in matrix form. Define a $n\times n$ scalar matrix $\boldsymbol{S}_i$ with its $(j,q)$-th entry given by
\begin{empheq}[left={S_i(j,q)=\empheqlbrace}]{align*}
  & 1-w(t_j-\tau_i) \quad &&\mbox{for } j\in \{1,\cdots, n\}, q=k(t_j-\tau_i)\\
  & w(t_j-\tau_i) \quad &&\mbox{for }  j\in \{1,\cdots, n\}, q=k(t_j-\tau_i)+1\\
  & 0 \quad &&\text{otherwise}
\end{empheq}
where $k(t_j-\tau_i)$ denotes the integer $c\in\{1,\cdots, n\}$ satisfying $t_j-\tau_i\in[t_c, t_{c+1}]$. Then Equation \eqref{eqn:linear_interpolation_approximation} can be rewritten as $\hat{\boldsymbol{x}}_i(\boldsymbol{I}-\tau_i) = \boldsymbol{S}_i\cdot \boldsymbol{x}_i(\boldsymbol{I})$.

Note that both approximation schemes can be viewed as a deterministic linear transformation of $\boldsymbol{x}(\boldsymbol{I})$. Given the history function $\mathcal{H}_{\tau_i}$, the approximation schemes are only applied to the historical output $x_i(t-\tau_i)$ for $t\in\boldsymbol{I}$ where $t > \tau_i$, as  we have $x_i(t-\tau_i) = x_i(0)$ for $t\le \tau_i$. Conditional expectation eliminates the need to work with the full covariance matrix of $(\boldsymbol{x}_i(\boldsymbol{I}-\tau_i), \boldsymbol{x}_i(\boldsymbol{I}))^\top$ and draw samples of ${\boldsymbol{x}}_i(\boldsymbol{I}-\tau_i)$ 
from the larger joint posterior in Equation \eqref{eqn:posterior_dde}. However, it still requires computation of the cross-covariance $\mathcal{K}_i(\boldsymbol{I},\boldsymbol{I}-\tau_i)$ and dense matrix multiplication. The computation cost can be substantively reduced further using linear interpolation, as then only sparse matrix ($\boldsymbol{S}_i$) operations are needed.

We next investigate the approximation error of $\mathbf{f}(\boldsymbol{x}(\boldsymbol{I}),\hat{\boldsymbol{x}}(\boldsymbol{I}-\boldsymbol{\tau}), \boldsymbol{\theta}, \boldsymbol{I})$ under the two proposed schemes. It suffices to find the stochastic error bound for any $i\in\{1,\cdots,m\},$ of $\max_{t\in\boldsymbol{I}}\bigg|f_i(\boldsymbol{x}(t),\boldsymbol{x}(t-\boldsymbol{\tau}),\boldsymbol{\theta},t)- f_i(\boldsymbol{x}(t),\hat{\boldsymbol{x}}(t-\boldsymbol{\tau}),\boldsymbol{\theta},t)\bigg|.$ 
Under some mild differentiability conditions (listed in Appendix B), we have Theorems \ref{thrm:pf_conditional_expectation} and  \ref{thrm:pf_linear_interpolation}, with proofs provided in Appendices C and D. These results guarantee that the approximation $f_i(\boldsymbol{x}(t),\hat{\boldsymbol{x}}(t-\boldsymbol{\tau}),\boldsymbol{\theta},t) $ will be reasonable under either scheme, given that $|\boldsymbol{I}|$ is sufficiently large.

 \begin{theorem}
    \label{thrm:pf_conditional_expectation}
    Suppose $x_i(t)$ is a GP with mean function $\mu_i(t)$ and Matern covariance kernel $\mathcal{K}_i$ with $\nu \in \{2.01, 2.5\}$, for $i = 1, \cdots, m$, and that the $n$ points in the discretization set $\boldsymbol{I}$ are equally-spaced over $[0,T]$. If the values of the historical outputs are approximated by conditional expectation as described in Equation \eqref{eqn:conditional_expectation_approximation}, then for any $i\in\{1,\cdots, m\}$, $$ \max_{t\in\boldsymbol{I}}\bigg|f_i(\boldsymbol{x}(t),\boldsymbol{x}(t-\boldsymbol{\tau}),\boldsymbol{\theta},t)- f_i(\boldsymbol{x}(t),\hat{\boldsymbol{x}}(t-\boldsymbol{\tau}),\boldsymbol{\theta},t)\bigg|= O_P\left((\log(n))^{\frac{1}{2}}n^{-\nu}\right).$$
 \end{theorem}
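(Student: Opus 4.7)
My plan is to reduce the bound on the difference of $f_i$ to a bound on the uniform error of the approximation $\hat{\boldsymbol{x}}(t-\boldsymbol{\tau})$, and then to exploit the fact that under conditional expectation this error is itself a centered Gaussian process whose pointwise variance equals a kriging variance that decays polynomially in the mesh size.

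First I would invoke the differentiability conditions on $f_i$ from Appendix B to obtain a local Lipschitz bound of the form
\[
\bigl|f_i(\boldsymbol{x}(t),\boldsymbol{x}(t-\boldsymbol{\tau}),\boldsymbol{\theta},t)-f_i(\boldsymbol{x}(t),\hat{\boldsymbol{x}}(t-\boldsymbol{\tau}),\boldsymbol{\theta},t)\bigr|
\;\le\; L\,\sum_{j=1}^m \bigl|x_j(t-\tau_j)-\hat{x}_j(t-\tau_j)\bigr|,
\]
holding on the (probability-one) event that $\boldsymbol{x}$ and $\hat{\boldsymbol{x}}$ stay in a compact set; a standard tightness argument for GPs with Matern covariance makes the residual contribution of the complement negligible. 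This reduces the theorem to showing that, for each component $j$, $\max_{t\in\boldsymbol{I}}|x_j(t-\tau_j)-\hat{x}_j(t-\tau_j)| = O_P\bigl((\log n)^{1/2}n^{-\nu}\bigr)$.

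The key observation is that, by Equation \eqref{eqn:conditional_expectation_approximation}, the residual $R_j(s) := x_j(s)-E[x_j(s)\mid \boldsymbol{x}_j(\boldsymbol{I})]$ is a mean-zero Gaussian with variance equal to the kriging variance $\sigma_j^2(s) = \mathcal{K}_j(s,s) - \mathcal{K}_j(s,\boldsymbol{I})\mathcal{K}_j(\boldsymbol{I},\boldsymbol{I})^{-1}\mathcal{K}_j(\boldsymbol{I},s)$. For a Matern kernel of smoothness $\nu$, classical scattered-data approximation results (the native-space bounds of Wendland-type, or equivalently the Sobolev embedding of the reproducing kernel Hilbert space) give $\sup_{s\in[0,T]}\sigma_j(s) = O(h^{\nu})$ when interpolation is on a quasi-uniform grid with fill distance $h$. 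Since $\boldsymbol{I}$ is equally spaced, $h = T/(n-1) = O(n^{-1})$, yielding $\sigma_j(s) = O(n^{-\nu})$ uniformly in $s$, and in particular at every $s\in\{\boldsymbol{I}-\tau_j\}$.

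Finally, I would extract the $\sqrt{\log n}$ factor by a standard Gaussian maximum inequality: since the max in the statement is taken over the $n$ points of the finite grid $\boldsymbol{I}$, we have
\[
E\Bigl[\max_{t\in\boldsymbol{I}} |R_j(t-\tau_j)|\Bigr] \;\lesssim\; \Bigl(\max_{t\in\boldsymbol{I}}\sigma_j(t-\tau_j)\Bigr)\sqrt{2\log(2n)} = O\bigl((\log n)^{1/2}n^{-\nu}\bigr),
\]
and a Borell--TIS concentration argument converts this into the required $O_P$ statement. Summing over the finitely many components $j=1,\ldots,m$ preserves the same rate. The main obstacle is justifying the $O(h^{\nu})$ kriging-variance bound for the fractional Matern smoothness $\nu=2.01$: the cleanest route is to identify the Matern RKHS with a fractional Sobolev space $H^{\nu+1/2}[0,T]$ and apply the standard sampling inequality, after verifying that the truncation to $[0,T]$ and the boundary effects contribute at most an additional $O(n^{-\nu})$ term that can be absorbed into the constant.
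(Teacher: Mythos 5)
Your proposal is correct, and its overall skeleton matches the paper's: both arguments first use the boundedness of the derivatives of $f_i$ (Conditions in Appendix B) to reduce the problem to controlling $\max_{t}\lvert x_j(t-\tau_j)-\hat{x}_j(t-\tau_j)\rvert$, and then establish that this interpolation error is $O_P\left((\log n)^{1/2}n^{-\nu}\right)$. The difference lies in how that key bound is obtained. The paper simply cites Example 1 of \citet{wang2020prediction}, which delivers the uniform bound $\sup_{r\in[0,1]}\lvert E[z_i(r)\mid z_i(\boldsymbol{I})]-z_i(r)\rvert=O_P\left(n^{-\nu}(\log n)^{1/2}\right)$ over the whole interval, and then performs a second-order Taylor expansion of $f_i$, bounding the first-order term via Condition 2 and the remainder via Condition 3 and the mean value theorem. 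You instead re-derive the interpolation bound from first principles: the residual of the conditional expectation is a centered Gaussian whose pointwise standard deviation is the kriging power function, bounded by $O(h^{\nu})=O(n^{-\nu})$ via native-space/Sobolev sampling inequalities, and the $\sqrt{\log n}$ factor comes from the finite Gaussian maximum inequality over the $n$ grid points. This buys you a self-contained argument that exploits the fact that the theorem only requires a maximum over the finite set $\boldsymbol{I}$ (so no chaining over the continuum is needed), and your first-order Lipschitz reduction needs only the first-derivative bound rather than the paper's additional second-derivative condition; the cost is that you must verify the fractional-smoothness and boundary-truncation details of the power-function estimate for $\nu=2.01$, which the paper avoids by citation. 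One small point both treatments gloss over, and which you should make explicit, is that for $t\le\tau_j$ the history function is used and no approximation error is incurred, so the maximum need only run over $t>\tau_j$.
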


\begin{theorem}
    \label{thrm:pf_linear_interpolation} Consider the same setup as in Theorem \ref{thrm:pf_conditional_expectation}. If the values of the historical outputs are approximated by linear interpolation as described in Equation \eqref{eqn:linear_interpolation_approximation}, then for any $i\in\{1,\cdots, m\}$,
   $$ \max_{t\in\boldsymbol{I}}\bigg|f_i(\boldsymbol{x}(t),\boldsymbol{x}(t-\boldsymbol{\tau}),\boldsymbol{\theta},t)- f_i(\boldsymbol{x}(t),\hat{\boldsymbol{x}}(t-\boldsymbol{\tau}),\boldsymbol{\theta},t)\bigg|= O_P\left((\log(n))^{\frac{1}{2}}n^{-1.5}\right).$$ 
\end{theorem}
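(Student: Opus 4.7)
The plan is to reduce the bound on the difference of $f_i$ values to a bound on the linear interpolation error $|x_i(t-\tau_i) - \hat{x}_i(t-\tau_i)|$ via a Lipschitz argument on $f_i$, and then to control the interpolation error using the sample-path smoothness of the derivative process $x_i'$ together with Gaussian process concentration inequalities. Under the differentiability conditions from Appendix B, $f_i$ is locally Lipschitz in its second argument on compact sets, and since $\boldsymbol{x}(t)$ is bounded in probability uniformly over $t \in [0,T]$ by the Borell-TIS inequality, it suffices to show
$$\max_{t \in \boldsymbol{I},\ i \in \{1,\ldots,m\}}\ \big|x_i(t-\tau_i) - \hat{x}_i(t-\tau_i)\big| = O_P\big((\log n)^{1/2} n^{-3/2}\big).$$

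Fix $t \in \boldsymbol{I}$ and $i$. If $t \le \tau_i$, the history function forces $x_i(t-\tau_i) = \hat{x}_i(t-\tau_i) = x_i(0)$, so the error is zero. Otherwise let $s = t - \tau_i \in [t_j, t_{j+1}]$, $h = T/(n-1)$, and $u = (s-t_j)/h$. Writing $x_i(s) - x_i(t_j) = \int_{t_j}^s x_i'(v)\,dv$ and using the algebraic identity $(1-u)(s-t_j) = u(t_{j+1}-s) = u(1-u)h$, the $x_i'(s)$ terms cancel, leaving
$$x_i(s) - \hat{x}_i(s) = (1-u)\int_{t_j}^{s}[x_i'(v) - x_i'(s)]\,dv - u\int_{s}^{t_{j+1}}[x_i'(v) - x_i'(s)]\,dv.$$
Since $x_i$ is a Matern GP with $\nu \in \{2.01, 2.5\}$, $x_i'$ is a well-defined centered GP whose increment standard deviation satisfies $\mathrm{sd}(x_i'(u) - x_i'(v)) \le C |u-v|^{1/2}$ for small $|u-v|$ (a conservative bound valid for either admissible $\nu$). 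Applying the Borell-TIS inequality together with a union bound across the $O(n)$ subintervals of length $h$ gives
$$\sup_{v,v'\in[0,T],\ |v-v'|\le h}\big|x_i'(v) - x_i'(v')\big| = O_P\big(h^{1/2}(\log n)^{1/2}\big).$$

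Substituting this modulus-of-continuity bound back, $|x_i(s) - \hat{x}_i(s)| \le h \cdot O_P(h^{1/2}(\log n)^{1/2}) = O_P(n^{-3/2}(\log n)^{1/2})$, uniformly in $t$ and $i$, which combined with the Lipschitz reduction yields the claim. The main obstacle is the uniform modulus-of-continuity estimate for $x_i'$: obtaining the correct $(\log n)^{1/2}$ factor, and maintaining uniformity across the growing discretization, requires a careful chaining or union-bound argument applied to the derivative GP $x_i'$.
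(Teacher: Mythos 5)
Your proposal is correct and lands on the same two-stage architecture as the paper's proof: bound the uniform linear-interpolation error of the GP at rate $O_P\bigl((\log n)^{1/2}n^{-3/2}\bigr)$ using the fact that the increments of $x_i'$ satisfy $\mathrm{sd}\bigl(x_i'(u)-x_i'(v)\bigr)\le C|u-v|^{1/2}$ (a deliberately conservative bound for $\nu\in\{2.01,2.5\}$, exactly as the paper argues by reducing to the rougher $\nu=3/2$ Matern kernel in its Lemma 4), and then transfer this through $f_i$ using the bounded derivatives from the appendix conditions. Where you genuinely diverge is in the middle step: the paper obtains the interpolation-error bound by citing a large-deviation theorem of Seleznjev (1996) for the uniform deviation of piecewise-linear interpolants of Gaussian processes, and then spends its effort verifying the $L^2$-H\"older hypothesis and unwinding the resulting tail bound; you instead give a self-contained derivation via the exact identity $x_i(s)-\hat x_i(s)=(1-u)\int_{t_j}^{s}[x_i'(v)-x_i'(s)]\,dv-u\int_{s}^{t_{j+1}}[x_i'(v)-x_i'(s)]\,dv$ combined with a Borell--TIS/chaining modulus-of-continuity estimate for the derivative process. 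The paper's route is shorter because the hard probabilistic estimate is outsourced; yours is more transparent about where the exponent $-3/2=-(1+\alpha/2)$ with $\alpha=1$ comes from, and makes visible that the rate is an artifact of the conservative increment exponent (for $\nu=2.01$ or $2.5$ the true exponent would give nearly $n^{-2}$). Your transfer step also differs mildly: you use a first-order local Lipschitz reduction via Condition 2, whereas the paper does a second-order Taylor expansion invoking Condition 3 as well; both suffice since the second-order remainder is of lower order. The only place your sketch leaves real work undone is the uniform modulus-of-continuity estimate for $x_i'$ (expected supremum over each subinterval plus the union bound), which you correctly flag; this is standard but is precisely the content that the paper's citation of Seleznjev replaces.
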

\begin{remark}
Theorems \ref{thrm:pf_conditional_expectation} and \ref{thrm:pf_linear_interpolation} assume an equally-spaced discretization set for $\boldsymbol{I}$. While the MAGIDDE method does not require an equally-spaced $\boldsymbol{I}$, these results suggest that a good practical guideline is to choose an equally-spaced (or approximately equally-spaced) set that includes the observation times.
\end{remark}
\begin{remark}
$\nu$ controls the level of smoothness of the Matern covariance; larger values of $\nu$ provide smoother covariance functions. If the degree of freedom in the Matern covariance kernel is set as $\nu = 1.5$, the rate of convergence of $f_i(\boldsymbol{x}(t),\hat{\boldsymbol{x}}(t-\boldsymbol{\tau}),\boldsymbol{\theta},t)$ will be similar whether we approximate the historical outputs by linear interpolation or conditional expectation.
Noting that MAGIDDE chooses $\nu \in \{2.01, 2.5\}$, Theorems \ref{thrm:pf_conditional_expectation} and \ref{thrm:pf_linear_interpolation} indicate that the conditional expectation scheme has a faster convergence rate than linear interpolation. \label{remark:conv_rate}
\end{remark}
\begin{remark}
    If $\boldsymbol{\tau}$ in Equation \eqref{eq:dde} is replaced by an equation-specific vector of time-delay parameters $\boldsymbol{\tau}_i = (\tau_{i,1},\tau_{i,2}, \ldots, \tau_{i,m})$ for the $i$-th DDE, the results of Theorems \ref{thrm:pf_conditional_expectation} and \ref{thrm:pf_linear_interpolation} on $\max_{t\in\boldsymbol{I}}\bigg|f_i(\boldsymbol{x}(t),\boldsymbol{x}(t-\boldsymbol{\tau}_i),\boldsymbol{\theta},t)- f_i(\boldsymbol{x}(t),\hat{\boldsymbol{x}}(t-\boldsymbol{\tau}_i),\boldsymbol{\theta},t)\bigg|$ still hold under the same differentiability conditions, by simply substituting $\boldsymbol{\tau}$ with $\boldsymbol{\tau}_i$.
\end{remark}

Remark \ref{remark:conv_rate} suggests a trade-off between the faster theoretical convergence rate of conditional expectation and faster computation speed of linear interpolation. We numerically validate and compare the performance of the three schemes on a benchmark system in Appendix E.  The results (Supplementary Table 1, Supplementary Figures 1 and 2) indicate that linear interpolation has a significant speed advantage over the other schemes and yields comparable estimates of parameters and trajectories when $|\boldsymbol{I}|$ is large enough, hence is our preferred choice in practice.

\subsection{Practical Implementation}
\label{sec:implementation}
The practical steps to implement the MAGIDDE method with the linear approximation scheme are as follows.
We begin by initializing the required parameter and hyper-parameter values needed for HMC sampling. 
First, we fit a GP to the noisy observations $\boldsymbol{y}(\boldsymbol{\gamma}_i)$ for each component $i$. If the noise level $\sigma_i^2$ is unknown, we obtain values of $\phi_{i,1}$, $\phi_{i,2}$ and $\sigma_i^2$ by maximizing the marginal likelihood $p(\phi_{i,1}, \phi_{i,2}, \sigma_i^2|\boldsymbol{y}(\boldsymbol{\gamma}_i))$; if $\sigma_i^2$ is known, we maximize the marginal likelihood $p(\phi_{i,1}, \phi_{i,2}|\boldsymbol{y}(\boldsymbol{\gamma}_i))$ instead. The obtained values of the covariance hyper-parameters $\phi_{i,1}$ and $\phi_{i,2}$ are held fixed during subsequent HMC sampling, as in \citet{yang2021inference}. 
Second, we use the observations $\boldsymbol{y}(\boldsymbol{\gamma})$ as the starting value of $\boldsymbol{x}(\boldsymbol{\gamma})$ for HMC sampling, and $\boldsymbol{x}(\boldsymbol{I})$ for time points in $\boldsymbol{I}\backslash\boldsymbol{\gamma}$ is initialized by linearly interpolating the observations.
Third, we optimize the posterior in Equation \eqref{eqn:posterior_approximate} with respect to $\boldsymbol{\theta}$ and $\boldsymbol{\tau}$ (holding other quantities fixed at their initialized values) to provide their corresponding starting values for HMC sampling. 
Recall that under the approximation scheme, it is not necessary to explicitly initialize $\hat{\boldsymbol{x}}(\boldsymbol{I}-\boldsymbol{\tau})$, since it can be viewed as a function of $\boldsymbol{\tau}$ and  $\boldsymbol{x}(\boldsymbol{I})$.

We then proceed to jointly sample $\boldsymbol{x}(\boldsymbol{I})$, together with unknown parameters in $\boldsymbol{\theta}$, $\boldsymbol{\tau}$, and $\boldsymbol{\sigma^2}$ 
from their posterior distribution (Equation \eqref{eqn:posterior_approximate}) using HMC \citep{neal2011mcmc}. Recall that we fix the GP hyper-parameters and use the tempering scheme described in Section \ref{sec:background} to balance the contributions of the GP prior and likelihood during HMC sampling. 
Our implementation of HMC adjusts the step sizes of the leapfrog integrator automatically during the burn-in phase to ensure that the acceptance rate falls within the range of 60$\%$ to 90$\%$.  
After discarding the samples from the burn-in period, the posterior means of $\boldsymbol{\theta}$ and $\boldsymbol{\tau}$ are treated as the parameter estimates. We call the posterior mean of $\boldsymbol{x}(\boldsymbol{I})$ the \textit{inferred trajectory}, which represents our estimate of the system trajectory. The uncertainties of $\boldsymbol{\theta}$, $\boldsymbol{\tau}$, and $\boldsymbol{x}(\boldsymbol{I})$ are quantified via 95$\%$ pointwise credible intervals.

\section{Simulation Studies}
\label{sec:simulation}
We assess the performance of MAGIDDE and compare with other representative methods for DDE inference, using Hutchinson's equation \citep{may1976simple} as a benchmark system in Section \ref{sec:hutchinson_simulation}. To further demonstrate the capability of MAGIDDE to handle more complex models, we consider the $lac$ operon system of \citet{yildirim2003feedback} in Section \ref{sec:lac_operon_system}.
\subsection{Benchmark System}
\label{sec:hutchinson_simulation}

As a benchmark system, we focus on parameter inference for Hutchinson's equation to evaluate the estimation accuracy of different methods under varying observation sample sizes. Hutchinson's equation was proposed for modelling single-species population dynamics, and \citet{may1976simple} introduced the following DDE for the population size of blowflies \citep{nicholson1954compensatory},
$dP(t)/dt = rP(t)\left[1-P(t-\tau)/(1000\cdot K)\right],$
where $P(t)$ is the population size at time $t$, $r$ is the reproduction rate, $\tau$ is a time delay, and $1000\cdot K$ is the maximum population sustainable by the limited food supply. A noisy observation taken at time $t$ is assumed to follow a lognormal distribution with mean $P(t)$ and variance $\sigma^2$ \citep{wang2022adaptive}. Since blowfly counts are strictly positive, we consider a logarithmic transformation by defining $P(t) = \exp\{N(t)\}$. Then, the log-transformed DDE is \begin{equation}
dN(t)/dt = r\left[1-\exp(N(t-\tau))/(1000\cdot K)\right],
\label{eqn:dde_eqn_log}
\end{equation}
and we can equivalently estimate $r$, $K$ and $\tau$ in Equation \eqref{eqn:dde_eqn_log} based on the noisy observations $y(t) \sim N(N(t), \sigma^2)$ for $t \in \boldsymbol{\gamma}$.

Following \citet{wang2022adaptive}, we set the true values of the parameters $(\boldsymbol{\theta},\boldsymbol{\tau}) = (r, K, \tau) = (0.8, 2, 3)$, the initial condition on the log-scale as $N(0) = \log(3500)$, the noise level $\sigma = 0.1$, and the history function $\mathcal{H}_\tau = \{N(t) = N(0), t\in[-\tau, 0]\}$. We also follow the setup in \citet{wang2012estimating} and take the time interval of interest as $t \in [0,30]$. Simulations in previous studies focused on a relatively dense observation set; however, estimation performance under more challenging scenarios with sparser observations also deserves attention.
Therefore, we consider scenarios with  $|\boldsymbol{\gamma}|=$ 16, 31, 61, and 121 equally-spaced observations over [0,30]. To create the simulation data, the R package $\mathit{deSolve}$ \citep{soetaert2010solving} is used to numerically solve the DDE trajectories as defined by Equation \eqref{eqn:dde_eqn_log}. 

We use two metrics to compare the performance of different methods. 
First, we calculate the root mean square error (RMSE) of the estimated parameters relative to the truth. 
Second, following \citet{yang2021inference}, we also calculate the \textit{trajectory RMSE} to assess the accuracy of the system trajectory implied by the parameter estimates. 
The trajectory RMSE is computed by the following steps: first, we use a numerical solver to construct the true trajectory based on the true parameter values and history function; then, we use the numerical solver to reconstruct the trajectory implied by the estimates of the parameters and history function. 
Last, we calculate the RMSE between the true trajectory and the reconstructed trajectory at the observation time points on the original scale of the measurements.

We compare MAGIDDE with two other representative methods for DDE inference: the \textit{deBInfer} R package and the semiparametric Bayesian collocation method \citep[SMCDE,][]{wang2022adaptive}. 
Briefly, \textit{deBInfer} uses MCMC to conduct parameter inference with the help of a numerical solver. SMCDE represents the solution trajectory via a basis function expansion and uses SMC to sample the parameters and associated trajectories. For each of these three methods, we choose similar flat priors for the model parameters to ensure fair comparison. We assess the performance for each method using 300 simulated datasets from the system.  We briefly describe how these three methods are run to conduct parameter inference for each simulated dataset in Appendix F.1.

For each of the three methods, we compute the RMSEs of the parameters and reconstructed trajectories, as summarized in 
Table \ref{tab:para_model_comparison} for the 300 simulated datasets. 
Supplementary Figures 3 and 4 in Appendix F.2 display the boxplots of the parameter estimates and trajectory RMSEs for each individual dataset.
Both MAGIDDE and SMCDE exhibit a similar pattern, that increasing the number of observations leads to more accurate parameter estimates; however, deBInfer does not follow this pattern as its random-walk MCMC sampler can struggle to efficiently explore the parameter space, regardless of the number of observations (as evidenced by a number of outliers in the boxplots as shown in Supplementary Figures 3 and 4).
MAGIDDE consistently outperforms the other two methods in terms of recovering the trajectory and parameters with lower error. Sparse observations do not inherently pose a problem for MAGIDDE, since the accuracy of the linear approximation scheme depends on the number of discretization points, and not the number of observations. To visualize the quality of the inferred trajectories, Figure \ref{fig:model_performance_magi} shows that MAGIDDE well-recovers the true underlying trajectory in the sparse 16 observation scenario; 
the 95$\%$ pointwise credible interval becomes narrower for scenarios with denser observations, e.g., Supplementary Figure 5 in Appendix G shows the corresponding plot for 61 observations.
MAGIDDE is also the fastest of the three methods, requiring about 26 seconds for the scenarios with $|\boldsymbol{I}| = 61$ (i.e., 16, 31, and 61 observations) and just over 1 minute for the scenario with 121 observations, using our R package implementation. While SMCDE is reasonably fast, deBInfer is an order of magnitude slower due to its dependence on the numerical DDE solver.

\begin{table}[!htbp]
\caption{Average parameter estimates (with RMSEs in parentheses) for the log-transformed Hutchinson's equation and average trajectory RMSEs over the 300 simulated datasets. The last column gives the average runtime (in minutes, on a single CPU core).}
\resizebox{\textwidth}{!}{
\centering
\begin{tabular}{@{}rlrrrrrrr@{}}
\toprule
\multicolumn{1}{c}{$|\boldsymbol{\gamma}|$}       & Method                  & \multicolumn{1}{c}{$r$} & \multicolumn{1}{c}{$K$} & \multicolumn{1}{c}{$\tau$} & \multicolumn{1}{c}{$N(0)$} & \multicolumn{1}{c}{$\sigma$} & \multicolumn{1}{c}{Traj.}& \multicolumn{1}{c}{Runtime} \\ \midrule
\multirow{1}{*}{16}  & MAGIDDE                    & 0.80($\mathbf{<0.01}$)              & 2.00($\mathbf{0.07}$)              & 3.00($\mathbf{0.01}$)                 & 8.16($\mathbf{0.04}$)                 & 0.11($\mathbf{0.02}$)                   & $\mathbf{96.75}$        &    $\mathbf{0.43}$      \\
                           & deBInfer                  & 0.67(0.33)              & 1.66(0.81)              & 2.56(1.02)                 & 8.14(0.09)                 & 0.12(0.05)                               & 628.10         &  14.4      \\
                           &SMCDE & 0.70(0.10)              & 2.01(1.08)              & 3.14(0.15)                 & 8.14(0.10)                 & 0.16(0.06)                   & 211.08  &     1.55        \\ \\
\multirow{1}{*}{31}  & MAGIDDE                    & 0.80($\mathbf{<0.01}$)              & 2.00($\mathbf{0.05}$)              & 3.00($\mathbf{0.01}$)                 & 8.16($\mathbf{0.03}$)                 & 0.10($\mathbf{0.01}$)                   & $\mathbf{74.69}$      &    $\mathbf{0.43}$       \\
                           & deBInfer                 & 0.66(0.33)              & 1.67(0.80)              & 2.57(1.02)                 & 8.11(0.14)                 & 0.12(0.04)                                & 589.05    &  19.1        \\
                           & SMCDE & 0.76(0.05)              & 2.01(0.14)              & 2.96(0.07)                 & 8.15(0.10)                 & 0.16(0.06)                   & 205.99 &       1.99       \\ \\
\multirow{1}{*}{61}  & MAGIDDE                    & 0.80($\mathbf{<0.01}$)              & 2.00($\mathbf{0.04}$)              & 3.00($\mathbf{0.01}$)                 & 8.16($\mathbf{0.02}$)                 & 0.10($\mathbf{0.01}$)                   & $\mathbf{55.14}$      &    $\mathbf{0.44}$         \\
                           & deBInfer               & 0.69(0.29)              & 1.73(0.71)              & 2.64(0.92)                 & 8.12(0.16)                 & 0.12(0.04)                    & 521.23   & 27.0         \\
                           & SMCDE & 0.76(0.04)              & 2.00(0.06)              & 2.94(0.07)                 & 8.15(0.09)                 & 0.16(0.06)                   & 200.47 &        2.76      \\ \\
\multirow{1}{*}{121} & MAGIDDE                    & 0.80($\mathbf{<0.01}$)              & 2.00($\mathbf{0.02}$)              & 3.00($\mathbf{0.01}$)                 & 8.16($\mathbf{0.02}$)                 & 0.10($\mathbf{0.01}$)                   & $\mathbf{32.59}$      &        $\mathbf{1.16}$     \\
                           & deBInfer                  & 0.65(0.34)              & 1.64(0.83)              & 2.50(1.14)                 & 8.09(0.20)                 & 0.12(0.04)               &  626.89    &    38.9       \\
                           & SMCDE & 0.77(0.04)              & 2.00(0.03)              & 2.94(0.07)                 & 8.14(0.08)                 & 0.16(0.06)                   &177.20 &     3.91        \\ \bottomrule
\end{tabular}}
\label{tab:para_model_comparison}
\end{table}

\begin{figure}[!htbp]
    \centering
    \includegraphics[width = \textwidth]{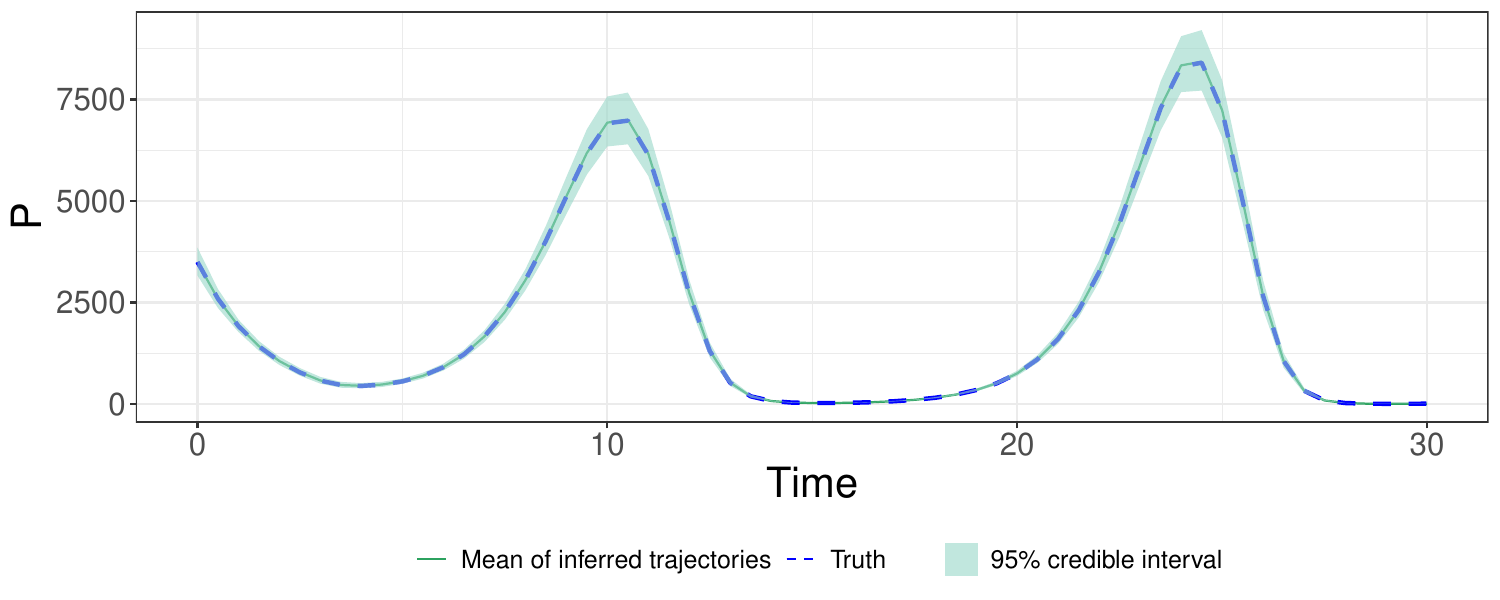}
    \caption{Inferred trajectory of Hutchinson's equation from 16 observations using MAGIDDE. The green solid line represents the mean inferred trajectory over 300 simulated datasets, and the blue dashed line is the truth. The green shaded area is the 95$\%$ pointwise credible interval, constructed by taking the average 0.025 and 0.975 quantiles of the inferred trajectories across the simulated datasets. }
    \label{fig:model_performance_magi}
\end{figure}

\subsection{$Lac$ Operon System}
\label{sec:lac_operon_system}
Among previous methods for DDE inference, performance was typically assessed using simple DDE models with only one or two components with four or fewer parameters to be estimated \citep{wang2012estimating,dondelinger2013ode,wang2022adaptive}, e.g., Hutchinson's equation. In this section, we showcase MAGIDDE's capability to perform inference for a more complex DDE model.

The $lac$ operon is a well-studied gene expression system that enables bacteria to use lactose as an energy source when glucose is scarce \citep{pardee1959genetic}. \citet{yildirim2003feedback} modelled the regulation of induction in the $lac$ operon in $Escherichia$ $coli$
via a system of DDEs to describe the dynamics among five components: 
$\beta$-galactosidase ($B$) converts intracellular lactose ($L$) into allolactose ($A$), which allows DNA transcription and mRNA ($M$) translation to increase the levels of $\beta$-galactosidase ($B$) and permease ($P$). 
The processes of transcription and translation are not instantaneous but require a period of time. Hence, \citet{yildirim2003feedback} incorporated these time delays into the system, to help ensure a satisfactory match between experimental data and the model trajectories. In the following system of DDEs, the time-delay parameter $\tau_M$ represents the time required to produce mRNA from DNA by transcription, $\tau_P$ denotes the translation time between mRNA and permease, and $\tau_B$ is the translation time between mRNA and $\beta$-galactosidase:
\begin{equation*}
    \begin{dcases}
M'(t) &= \alpha_M \left( \frac{1 + K_1 \left(e^{-\mu \tau_M} A(t-\tau_M) \right)^n}{K + K_1 \left(e^{-\mu \tau_M} A(t-\tau_M)  \right)^n} \right) + \Gamma_0 - (\gamma_M + \mu) M(t) \\
B'(t) &= \alpha_B e^{-\mu \tau_B} M(t-\tau_B) - (\gamma_B + \mu) B(t) \\
A'(t) &= \alpha_A B(t)\frac{ L(t)}{K_L + L(t)} - \beta_A B(t)\frac{ A(t)}{K_A + A(t)} - (\gamma_A + \mu) A(t) \\
L'(t) &= \alpha_L P(t)\frac{L_e}{K_{L_e} + L_e} - \beta_{L_1} P(t)\frac{ L(t)}{K_{L_1} + L(t)} - \beta_{L_2}B(t) \frac{ L(t)}{K_L + L(t)} - (\gamma_L + \mu) L(t) \\
P'(t) &= \alpha_P e^{-\mu (\tau_B + \tau_P)} M\left(t-(\tau_B + \tau_P)\right) - (\gamma_P + \mu) P(t)
    \end{dcases}.
\end{equation*}

\citet{yildirim2003feedback} derived formulas for $\alpha_M, \alpha_B,\alpha_P$ according to the steady-state behaviour of the system to calibrate their values; $\mu$ and $\gamma_A$ were calibrated by fitting the model to the experimental data for the induction kinetics of $\beta$-galactosidase; other parameter values were obtained directly from existing literature. Supplementary Table 2 in Appendix H summarizes all of the parameters and their corresponding values from \citet{yildirim2003feedback}. Thus, our goal is to estimate $\tau_B,\tau_M,\tau_P,\gamma_A, \alpha_M,\alpha_B,\alpha_P, \mu,M(0), B(0), A(0), L(0),P(0)$, which include the time-delay parameters, initial conditions, and model parameters calibrated by \citet{yildirim2003feedback}; others are treated as known.

Some experiments recorded measurements of $\beta$-galactosidase every one minute over a 25-minute observation period \citep{alpers1966sequential}. Capturing the transcription delay is still a challenge, as the synthesis of $\beta$-galactosidase and initiations of transcription may occur synchronously in 10 seconds \citep{schwartz1970inactivation}. Other researchers took measurements more frequently at the beginning of the experiments, and gradually less frequently afterwards \citep{alpers1966sequential}. Inspired by these real experiments, we simulate observations from the system with varying denseness from $t=0$ to $t=25$. Specifically, the collection of 23 observation time points for each component is chosen as $\boldsymbol{\gamma} = \{0, 0.25, 0.5, \cdots, 2, 3,4, \cdots,10, 12, 14, \cdots ,20, 25\}$. To mimic the noisy measurements taken from the system, we add Gaussian noise with standard deviation $\sigma_M = 3\times 10^{-5},\sigma_B= 1\times 10^{-5},\sigma_A = 0.02, \sigma_L= 0.01, \sigma_P= 5\times 10^{-4}$ to the model trajectories of each component, which we assume to be known from repeated measures.
 
We generate 100 simulated datasets based on the above setup for the $lac$ operon system. Noting that the observation time points are not equally spaced, the smallest evenly spaced set containing all observation time points is $\boldsymbol{I}_0=\{0,0.25, \cdots, 24.75, 25\}$.
We set $\boldsymbol{I} =\boldsymbol{I}_0$, which provides stable inference without need for a further increase in discretization points. We place uniform priors over $(0,\infty)$ for all model parameters and set $\nu = 2.01$ in the Matern kernel to accommodate rougher system trajectories. 
We run 50,000 HMC iterations (each with 200 leapfrog steps) with the first 25,000 discarded as burn-in. The computational time required for MAGIDDE on each simulated dataset is $\sim$30 minutes on a single CPU core.

Table \ref{tab:lac_est_param} summarizes the average parameter estimates and standard deviations across the 100 simulated datasets. MAGIDDE provides reasonable estimates of  $\tau_B,\tau_P,\gamma_A, \alpha_M,\alpha_B,\alpha_P$, in terms of relatively low standard deviation and bias. However, $\tau_M$ tends to be overestimated, while $\mu$ tends to be underestimated, and both have a high standard deviation. A possible reason is that the observation set is not dense enough to capture the short time-delay of 0.1. Moreover, $\mu$ is closely related to $\tau_M$ as the product $\mu \cdot  \tau_M$ appears in the exponential 
in the first equation of the system. Indeed, the product of the estimated $\mu$ and $\tau_M$ is 0.0032, which is relatively close to the true value of 0.0043.

\begin{table}[!htbp]
\caption{Average parameter estimates obtained by MAGIDDE (with standard deviation after $\pm$ sign) for the $lac$ operon model across 100 simulated datasets.} 
\begin{center}
\begin{tabular}{@{}llllll@{}}
\toprule
\multicolumn{1}{c}{Parameter} & \multicolumn{1}{c}{Truth} & \multicolumn{1}{c}{Estimate} & \multicolumn{1}{c}{Parameter} & \multicolumn{1}{c}{Truth} & \multicolumn{1}{c}{Estimate} \\ \midrule
$\tau_B$                      & 2                         & $2.0024\pm 0.3074$           & $\mu$                         & 0.0226                    & $0.0144\pm 0.0071$           \\
$\tau_M$                      & 0.1                       & $0.2990\pm 0.1275$           & $10^2\times M(0)$                         & 0.0626                   & $0.0630\pm 0.0014$           \\
$\tau_P$                      & 0.83                      & $0.8283\pm 0.3063$           & $10^3\times B(0)$                        & 0                         & $0.0018\pm 0.0014$           \\
$\gamma_A$                    & 0.52                      & $0.4916\pm 0.0255$           & $A(0)$                        & 0.038                     & $0.0319\pm 0.0124$           \\
$10^2\times\alpha_M$                    & $0.0997$                  & $0.0985\pm 0.0032$           & $L(0)$                        & 0.372                     & $0.4018\pm 0.0131$           \\
$\alpha_B$                    & 0.0166                    & $0.0143\pm 0.0011$           & $P(0)$                        & 0.0149                     & $0.0143\pm 0.0004$           \\
$\alpha_P$                    & 10                        & $9.6792\pm 0.3904$           &                               &                           &                              \\ \bottomrule
\end{tabular}
\end{center}
\label{tab:lac_est_param}
\end{table}

Figure \ref{fig:lac_fig} shows the mean inferred trajectories and 95$\%$ pointwise credible intervals obtained by MAGIDDE across the 100 simulated datasets, which indicate that our method can reliably recover the trajectories for the five components. Most of the true model outputs are well-covered by the narrow interval, except for a few data points from the $L$ and $P$ components during the first half minute. The 95$\%$ credible intervals for the $L$ component look different from the other four components, in terms of having a more ``bumpy'' shape; this is due to a small estimated GP bandwidth hyper-parameter ($\phi_2$) to accommodate the sharp increase from the first observation point. The small GP bandwidth in the $L$ component in turn leads to greater uncertainty in its estimated trajectory at discretization time points that are farther away from observations.

\begin{figure}[!htbp]
    \centering
    \includegraphics[width = \textwidth]{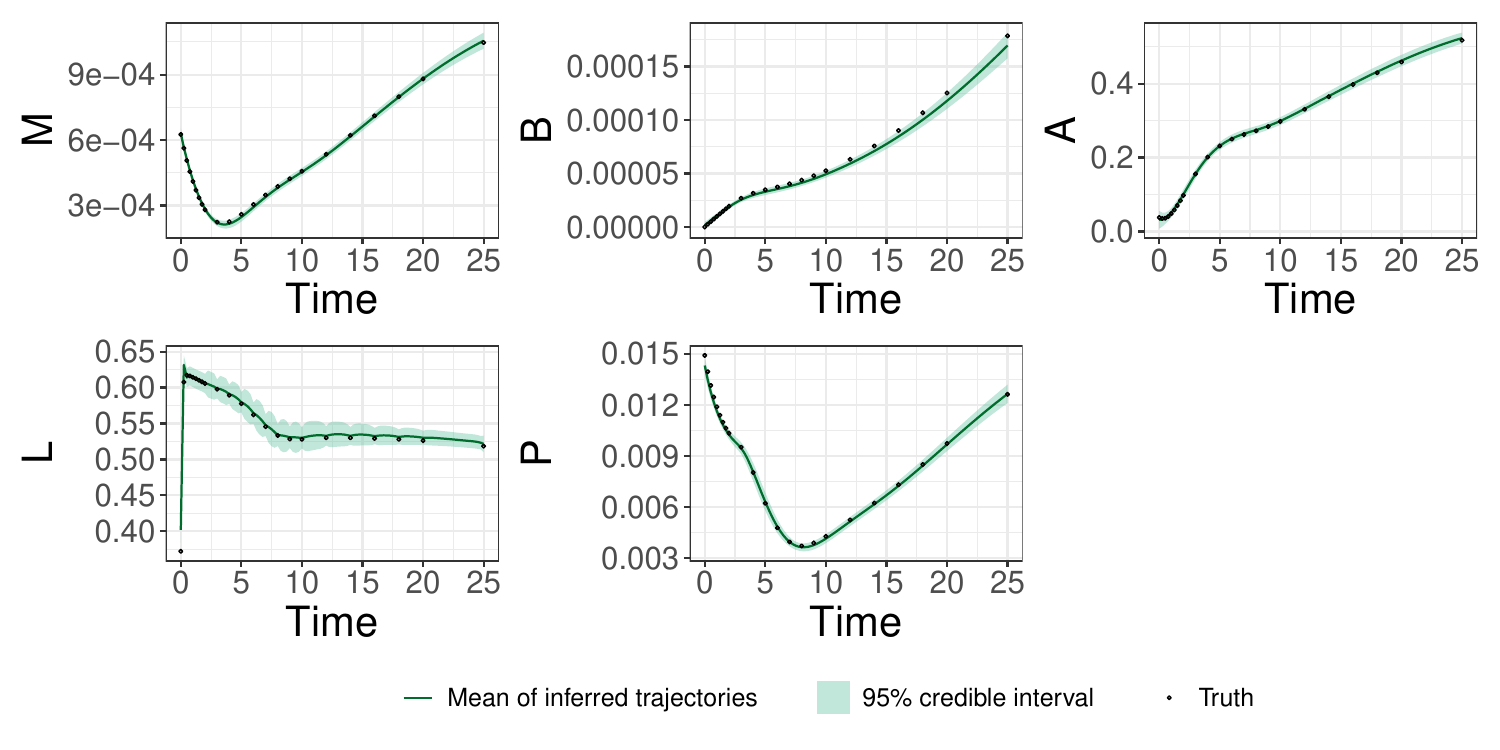}
    \caption{Inferred trajectories obtained by MAGIDDE for each component of the $lac$ operon system over 100 simulated datasets. The green solid line is the mean of the inferred trajectories. The shaded area represents the corresponding 95$\%$ pointwise credible interval, which is constructed by averaging the 0.025 and 0.975 quantiles of the inferred trajectories across the simulated datasets. The black dots indicate the true model trajectory at the observation time points, which are unevenly spaced and relatively sparse for the last 15 minutes.}
    \label{fig:lac_fig}
\end{figure}

\section{Application}
\label{sec:application}

This section applies MAGIDDE to estimate the parameters of a time-delayed compartmental model using COVID-19 data from Ontario, Canada. Omicron was the dominant variant in Ontario as of January 2022, and our focus is to infer the parameters for the Omicron variant over the 30-day observation period from January 24 to February 22, 2022, during the peak of the Omicron wave.
To account for the disease incubation period, a time-delay parameter can be introduced to the basic SIR model \citep{ma2004global}. Inspired by \citet{ma2004global}, we adapt their delayed SIR model to accommodate the Ontario COVID-19 data of interest. First, we add a new compartment $D$ to directly model the death counts associated with COVID-19. Second, we represent the population in each compartment as proportions.
Third, we ignore natural birth and death rates since only a short time period is considered. This leads to our DDE system of interest as follows:
\begin{equation}
\begin{dcases}
    S'(t) &= -\tilde{\beta} S(t)I(t-h) \\
    I'(t) &= \tilde{\beta} S(t)I(t-h) - \mu_d I(t) - \lambda I(t) \\
    R'(t) &= \lambda I(t) \\
    D'(t) &= \mu_d I(t)
\end{dcases},
\end{equation}
where $\tilde{\beta}$ is the normalized disease transmission rate that governs the flow from the susceptible ($S$) to the infected ($I$) compartment, $\mu_d$ represents the death rate of infected individuals, and $h$ is the time-delay accounting for the incubation period of the disease (in days), $\lambda$ is the recovery rate that governs the flow from $I$ to the recovered ($R$) compartment. 
Since $1 = S(t) + I(t) + R(t) + D(t)$, we drop the $S'(t)$  
equation as it suffices to model $(I,R,D)$ and recover $S$ 
using this relationship. In Appendix I, we detail the data processing steps to obtain noisy observations of the daily population size in the compartments.

To set up MAGIDDE for this system, we first consider the discretization set. The system is observed at the equally-spaced time points $\boldsymbol{\gamma} = \{0,1,\cdots, 29\}$ days.  We insert one additional discretization time point, equally-spaced between each pair of adjacent observations, i.e., $\boldsymbol{I} = \{0, 0.5, \cdots, 28.5, 29\}$; a further increase to $|\boldsymbol{I}|$ yielded similar inference. Second, we take the Matern kernel with $\nu = 2.01$ as a default choice. 
Third, we set priors for $\tilde{\beta}$, $\mu_d$, and $\lambda$ to be uniform over $(0,\infty)$. To incorporate the \emph{a priori} belief regarding the estimated mean incubation period of 3.5 days for the Omicron variant 
\citep{manica2022intrinsic}, we place an informative prior of $N(3.5,1)$ on the time-delay parameter $h$. Then, we ran 80,000 HMC iterations (each with 25 leapfrog steps)
with the first 40,000 discarded as burn-in.

Table \ref{tab:para_sird_model} summarizes the parameter estimates with 95$\%$ credible intervals. The model fitting results indicate that the estimated incubation period (3.03 days) is slightly shorter than the prior mean of 3.5 days, and the estimated recovery period ($1/0.0751\approx 13$ days) exceeds the typical recovery period (6.87 days) from the existing literature \citep{wise2022covid}. Noting that we used the number of patients admitted in hospital as a proxy to estimate the current infected population, such patients may have needed more time to recover. \citet{shao2022clinical} pointed out that the median recovery period for patients with moderate or acute symptoms was 13 days, which aligns with our model fitting result.

\begin{table}[!htbp]
\caption{Estimated parameters obtained by MAGIDDE for the time-delayed SIRD model, with 95$\%$ credible intervals, based on Ontario COVID-19 data from January 24 to February 22, 2022.  }
\begin{center}
\begin{tabular}{@{}lccccc@{}}
\toprule
Parameter                       & Estimate          & 95\% CI                         & Parameter          & Estimate              & 95\% CI                         \\ \midrule
$\tilde{\beta} $                        & 0.0254            & (0.0227, 0.0285)                & $R(0)$       &      0.0053         & (0.0050, 0.0056)                \\
$h$                            & 3.0360            & (1.2356, 4.7994)                & $10^5\times D(0)$     &   7.4850  & (7.4471, 7.5170)                \\
$10^3\times \mu_d$                        &  0.3327            & (0.3258, 0.3397)                & $10^4\times \sigma_1$   &   3.4730  & (2.6768, 4.5814)                \\
$\lambda$                      & 0.0751            & (0.0730, 0.0774)                & $10^4\times\sigma_2$   &    3.4756  & (2.6522, 4.6196)                \\
$I(0)$                       & 0.0145            & (0.0142, 0.0148)                & $10^7\times\sigma_3$   &    2.1557  & (0.9873, 3.5252)                \\ \bottomrule
\end{tabular}
\end{center}
\label{tab:para_sird_model}
\end{table}

The trajectory of the $S$ compartment is estimated by subtracting the pointwise estimate of the $I, R, D$ compartments from the total population; the same technique is used to construct its 95$\%$ pointwise credible interval. Figure \ref{fig:sird_model_traj} visualizes the estimated trajectories of all four compartments (in terms of population size) using MAGIDDE, which appear reasonable even though the credible intervals do not cover all the observations. Notably, we observe a weekend effect in the hospitalization data, evidenced by a 2-day flatter ``ledge'' after every five observations in the $I$ and $R$ components. This contributes to the inherent noise within the data, and thus the fitted trajectories appear smoother than the actual observations.

\begin{figure}[!htbp]
    \centering
    \includegraphics[width = \textwidth]{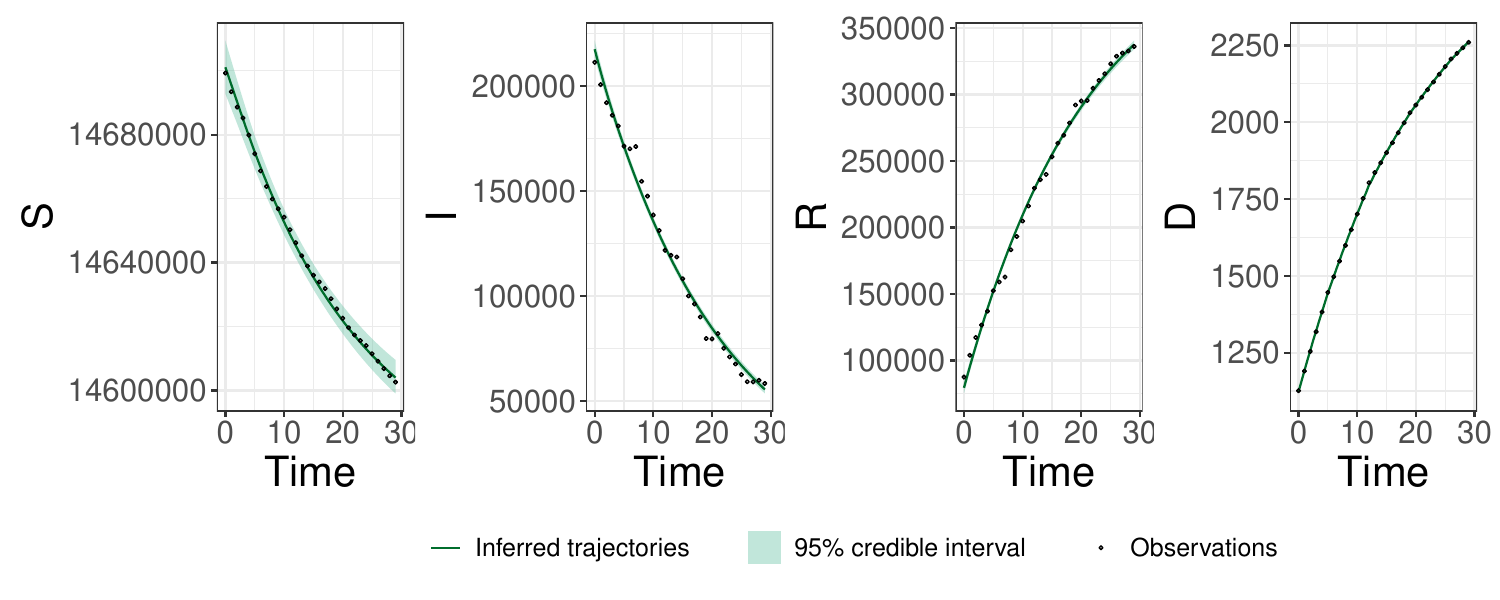}
\caption{Estimated trajectories for four components of the delayed SIR model with the 95 $\%$ credible interval using the Ontario data from January 24 to February 22. The green line represents the inferred trajectory and the green shaded area is the 95$\%$ pointwise credible interval. The black dots are the observed data.}
    \label{fig:sird_model_traj}
\end{figure}

\section{Discussion} \label{sec:discussion}
In this paper, we presented the MAGIDDE method for inference of DDE models. The method provides fast and accurate inference with the help of a linear interpolation scheme for handling historical outputs.
Simulation results indicate that MAGIDDE well-recovers the parameters and trajectory in a simple benchmark system compared to two other representative methods, and a larger \emph{lac} operon system demonstrates the capability of MAGIDDE to recover more complex dynamic systems with multiple time delays. 
Finally, we fit a time-delayed compartmental model to Ontario COVID-19 data as a practical application.

We outline some directions for future work. First, we focused on DDEs with the history function $\mathcal{H}_{\tau_i} = \{x_i(t) = x_i(0), t\in [-\tau_i,0]\}$, primarily due to its prevalence in both theoretical and practical applications \citep{kuang1993delay}.
However, more complex history functions also exist and could be worthy of investigation, as they can significantly alter the dynamics and stability of DDEs.
Second, while DDEs with time-constant parameters cover a broad range of examples, an extension of the method to DDEs with time-varying parameters could also be considered. For example, \citet{liu2020parameter} studied the dynamics of genetic regulatory networks with time-varying delays required for transcription and translation processes. In epidemiology, \citet{pei2023long} introduced the delayed SIR model with a time-varying disease transmission rate and removal rate.

\section*{Acknowledgements}

This work was partially supported by Discovery Grant RGPIN-2019-04771 from the Natural Sciences and Engineering Research Council of Canada.

\bibliography{ref}

\appendix
\renewcommand{\thesection}{Appendix \Alph{section}}
\renewcommand{\thefigure}{Supplementary Figure \arabic{figure}}
\captionsetup[figure]{labelformat=empty} % Remove "Table" prefix
\renewcommand{\thetable}{Supplementary Table \arabic{table}}
\captionsetup[table]{labelformat=empty} % Remove "Table" prefix

\setcounter{table}{0}
\setcounter{figure}{0}
\setcounter{theorem}{0}

\newpage
\clearpage
\begin{center}
  \LARGE Supplementary Material for Inference for Delay Differential Equations Using Manifold-Constrained Gaussian Processes by Yuxuan Zhao and Samuel W.K. Wong
\end{center}
\title{}
\maketitle
\section{Detailed Expression of Posterior Distribution in MAGI}
\label{sec:magi_detailed}
This section reviews the details of the original MAGI posterior distribution in \cite{yang2021inference}. According to Bayes' rule, the posterior distribution of $\boldsymbol{\theta}$ and $\boldsymbol{x}(\boldsymbol{I})$ given $W_{\boldsymbol{I}} = 0$ and $\boldsymbol{y}(\boldsymbol{\gamma})$ is 
\begin{equation}\begin{aligned}p(\boldsymbol{\theta}, \boldsymbol{x}(\boldsymbol{I})| W_{\boldsymbol{I}} = 0, \boldsymbol{y}(\boldsymbol{\gamma}))\propto p\left(\boldsymbol{\theta}, \boldsymbol{x}(\boldsymbol{I}), W_{\boldsymbol{I}} = 0, \boldsymbol{y}(\boldsymbol{\gamma})\right).
\end{aligned}
\label{eqn:joint_post_bayes}
\end{equation}
Factorizing Equation (\ref{eqn:joint_post_bayes}) yields the four terms,
$$\begin{aligned}
    &p\left(\boldsymbol{\theta},\boldsymbol{x}(\boldsymbol{I}), W_{\boldsymbol{I}} = 0, \boldsymbol{y}(\boldsymbol{\gamma})\right)
    = \underbrace{\pi(\boldsymbol{\theta})}_{(1)} 
\times \underbrace{p\left(\boldsymbol{x}(\boldsymbol{I})|\boldsymbol{\theta}\right)}_{(2)}\times \underbrace{p\left(\boldsymbol{y}(\boldsymbol{\gamma})| \boldsymbol{x}(\boldsymbol{I}),\boldsymbol{\theta}\right)}_{(3)} 
\times \underbrace{p\left(W_{\boldsymbol{I}} = 0| \boldsymbol{y}(\boldsymbol{\gamma}), \boldsymbol{x}(\boldsymbol{I}),\boldsymbol{\theta}\right)}_{(4)}.
\end{aligned}$$

The first term is the prior distribution of the model parameters. Since $\boldsymbol{\theta}$ is independent of $\boldsymbol{x}(\boldsymbol{I})$ from the GP, $p\left(\boldsymbol{x}(\boldsymbol{I})| \boldsymbol{\theta}\right)$ can be simplified to $p\left(\boldsymbol{x}(\boldsymbol{I})\right)$. Similarly, due to the independence between the noisy observations and model parameters, the condition on the parameters can be also be suppressed, yielding $p\left(\boldsymbol{y}(\boldsymbol{\gamma})|\boldsymbol{x}(\boldsymbol{I})\right)$ for the third term. 

We apply the definition of $W_{\boldsymbol{I}} =0$ in the fourth term and note that the distribution of the GP derivative $\boldsymbol{x}'(\boldsymbol{I})$ given $\boldsymbol{x}(\boldsymbol{I})$ is conditionally independent of both the model parameters and the noisy observations; hence the fourth term can be rewritten as $p\left(\boldsymbol{x}'(\boldsymbol{I}) =\mathbf{f} (\boldsymbol{x}(\boldsymbol{I}),\boldsymbol{\theta},\boldsymbol{I})|\boldsymbol{x}(\boldsymbol{I})\right)$, i.e., evaluating the density of $\boldsymbol{x}'(\boldsymbol{I})$ at $\boldsymbol{x}'(\boldsymbol{I}) = \mathbf{f} (\boldsymbol{x}(\boldsymbol{I}),\boldsymbol{\theta},\boldsymbol{I})$.

Combining the above with the fact that the second, third, and fourth terms are all multivariate Gaussian, this leads to the following expression of the posterior: $$
\begin{aligned}
    &\pi(\boldsymbol{\theta})
    \times p\left(\boldsymbol{x}(\boldsymbol{I})\right)\times p\left( \boldsymbol{y}(\boldsymbol{\gamma})|\boldsymbol{x}(\boldsymbol{I})\right) \times p\left(W_{\boldsymbol{I}} = 0| \boldsymbol{x}(\boldsymbol{I})\right)\\
    &\propto \underbrace{ \pi(\boldsymbol{\theta})}_{(1)} \exp \left\{-\frac{1}{2} \sum_{i=1}^m \left[\underbrace{|\boldsymbol{I}| \log (2 \pi)+\log \left|\boldsymbol{C}_i\right|+\left|\boldsymbol{x}_i(\boldsymbol{I})-\boldsymbol{\mu}_i(\boldsymbol{I})\right|_{\boldsymbol{C}_i^{-1}}^2}_{(2)} \right.\right.\\
    &\quad +\underbrace{N_i \log \left(2 \pi \sigma_i^2\right)+\left\|\boldsymbol{x}_i\left(\boldsymbol{\gamma}_i\right)-\boldsymbol{y}_i\left(\boldsymbol{\gamma}_i\right)\right\|_{\sigma_i^{-2}}^2}_{(3)} \\
    &\left.\left.\quad +\underbrace{|\boldsymbol{I}| \log (2 \pi)+\log \left|\boldsymbol{\zeta}_i\right|+\left\|\mathbf{f}_{i, \boldsymbol{I}}^{\boldsymbol{x}, \boldsymbol{\theta}}-\boldsymbol{\mu}'_i(\boldsymbol{I})-\boldsymbol{m}_i\left\{\boldsymbol{x}_i(\boldsymbol{I})-\boldsymbol{\mu}_i(\boldsymbol{I})\right\}\right\|_{\boldsymbol{\zeta}_i^{-1}}^2}_{(4)} \right]\right\},\\
    \end{aligned}
$$ where $\|\boldsymbol{v}\|_A^2=\boldsymbol{v}^{\boldsymbol{\top}} A \boldsymbol{v}$, $|\boldsymbol{I}|$ is the cardinality of $\boldsymbol{I}$, and $\boldsymbol{\mu}_i(\boldsymbol{I})$ is the mean function of the $i$-th component. Moreover, $\sigma_i$ denotes the noise level for component $i$, $N_i$ denotes the number of observations for the $i$-th component, and 
$\mathbf{f}_{i, \boldsymbol{I}}^{\boldsymbol{x}, \boldsymbol{\theta}}$ represents the $i$-th component of $\mathbf{f}\left(\boldsymbol{x}(\boldsymbol{I}), \boldsymbol{\theta}, \boldsymbol{I}\right)$. For each component $i$, the multivariate Gaussian covariance matrices $\boldsymbol{C}_i$ and $\boldsymbol{\zeta}_i$ are computed via
$$
\left\{\begin{array}{ll}
    \boldsymbol{C}_i & =\mathcal{K}_i(\boldsymbol{I}, \boldsymbol{I}) \\
    \boldsymbol{m}_i  & ={ }^{\prime} \mathcal{K}_i(\boldsymbol{I}, \boldsymbol{I}) \mathcal{K}_i(\boldsymbol{I}, \boldsymbol{I})^{-1} \\
    \boldsymbol{\zeta}_i & =\mathcal{K}_i^{\prime \prime}(\boldsymbol{I}, \boldsymbol{I})-{ }^{\prime} \mathcal{K}_i(\boldsymbol{I}, \boldsymbol{I}) \mathcal{K}_i(\boldsymbol{I}, \boldsymbol{I})^{-1} \mathcal{K}^{\prime}_i(\boldsymbol{I}, \boldsymbol{I})
    \end{array},\right.
    $$
    where ${ }^{\prime} \mathcal{K}_i=\frac{\partial}{\partial s} \mathcal{K}_i(s, t), \mathcal{K}^{\prime}_i=\frac{\partial}{\partial t} \mathcal{K}_i(s, t) \text {, and } \mathcal{K}^{\prime \prime}_i=\frac{\partial^2}{\partial s \partial t} \mathcal{K}_i(s, t)$.
    
\section{Conditions of Theorems 1 and 2}
We state the required differentiability assumptions in Conditions \ref{con:1}--\ref{con:3}.
\begin{condition}
   \label{con:1}
 For any $i\in\{1,\cdots,m\}$ and $t\in\boldsymbol{I}$, $f_i(\boldsymbol{x}(t),\boldsymbol{x}(t-\boldsymbol{\tau}),\boldsymbol{\theta},t)$ is continuously differentiable with respect to $\boldsymbol{x}(t-\boldsymbol{\tau})$ in a neighborhood of $\boldsymbol{x}(t-\boldsymbol{\tau})$. 
\end{condition}

\begin{condition}
    \label{con:2}
For any $i,d\in\{1,\cdots, m\}$, any $x_{d,0}$ lying in a neighborhood of $x_d(t-\tau_d)$, and $t\in\boldsymbol{I}$, there exists $C_1>0$, such that $$\bigg|\frac{\partial f_i(\boldsymbol{x}(t),\boldsymbol{x}(t-\boldsymbol{\tau}),\boldsymbol{\theta},t)}{\partial x_d(t-\tau_d)}\big|_{x_d(t-\tau_d) =x_{d,0}}\bigg|<C_1.$$
 \end{condition}
 \begin{condition}
    \label{con:3}
    For any $i, d,k \in\{1,\cdots, m\}$, any $x_{k,0}$ lying in a neighbourhood of $x_k(t-\tau_k)$, and $t\in\boldsymbol{I}$, there exists $C_2>0$, such that $$\bigg|\frac{\partial^2 f_i(\boldsymbol{x}(t),\boldsymbol{x}(t-\boldsymbol{\tau}),\boldsymbol{\theta},t)}{\partial x_d(t-\tau_d)\partial x_k(t-\tau_k)}\big|_{{\substack{x_d(t-\tau_d) =x_{d,0} \\x_k(t-\tau_k) =x_{k,0}}}}\bigg|<C_2.$$ 
 \end{condition}

Condition \ref{con:1} guarantees the differentiability of $f_i(\boldsymbol{x}(t),\boldsymbol{x}(t-\boldsymbol{\tau}),\boldsymbol{\theta},t)$ with respect to $\boldsymbol{x}(t-\boldsymbol{\tau})$. Conditions \ref{con:2} and \ref{con:3} ensure the boundedness of the first and second derivatives in a neighborhood of $\boldsymbol{x}(t-\boldsymbol{\tau})$. Under these three conditions, we have Theorems 1 and 2, which respectively provide the asymptotic error bounds for  $f_i(\boldsymbol{x}(t),\hat{\boldsymbol{x}}(t-\boldsymbol{\tau}),\boldsymbol{\theta},t)$ using the conditional expectation and linear interpolation schemes.

\section{Proof of Theorem 1}
\label{sec:proof_conditional_expectation}
Consider $\Omega = [0,T]$ as the time interval of interest.  Without loss of generality, assume $T = 1$ and suppose the discretization set $\boldsymbol{I} = \{t_1,\cdots, t_n\}$ follows a uniform space-filling design, such that $t_j = \frac{j}{n}$, for $j = 1,\cdots, n$.

Using conditional expectation to approximate the historical output $x_i(t-\tau_i)$ if $t>\tau_i$, we may write $\hat{x}_i(t-\tau_i) = E[x_i(t-\tau_i)|x_i(\boldsymbol{I})]$ for $i = 1,2,\cdots, m$. We denote $\hat{\boldsymbol{x}}(t-\boldsymbol{\tau})=(\hat{x}_1(t-\tau_1),\cdots, \hat{x}_m(t-\tau_m))$. Recall that if $t < \tau_i$, we set $\hat{x}_i(t-\tau_i) = x(0)$ based on the history function.

Define the demeaned Gaussian process $z_i(t) = x_i(t) - \mu_i(t)$, then $z_i(t)\sim \mathcal{GP}(\mathbf{0}, \mathcal{K}_i)$. Then according to Example 1 from \citet{wang2020prediction}, 
$$\sup_{r\in [0,1]}\bigg|E[z_i(r)|z_i(\boldsymbol{I})]-z_i(r)\bigg|=O_P\left(n^{-\nu}(\log(n))^{\frac{1}{2}}\right).$$
Substituting the definition of $z_i(t)$, we obtain
$$\begin{aligned}
    \sup_{r\in [0,1]}\bigg|E[z_i(r)|z_i(\boldsymbol{I})]-z_i(r)\bigg|
    =&\sup_{r\in [0,1]}\bigg|E[x_i(r)-\mu_i(t)|z_i(\boldsymbol{I})]-x_i(r)+\mu_i(t)\bigg|\\
    =&\sup_{r\in [0,1]}\big|\hat{x}_i(r)-x_i(r)\big|\\
    =&O_P\left(n^{-\nu}(\log(n))^{\frac{1}{2}}\right).
\end{aligned}$$
Clearly, when $\nu =2.01$ or $2.5$, $n^{-\nu}\log^{1/2}(n)\to 0$ as $n\to\infty$, so that $$\sup_{r\in [0,1]}\bigg|E[x_i(r)|x_i(\boldsymbol{I})]-x_i(r)\bigg| =o_p(1).$$

Next, consider the uniform deviation 
$$\max_{t\in\boldsymbol{I}}\bigg|f_i(\boldsymbol{x}(t),\boldsymbol{x}(t-\boldsymbol{\tau}),\boldsymbol{\theta},t)- f_i(\boldsymbol{x}(t),\hat{\boldsymbol{x}}(t-\boldsymbol{\tau}),\boldsymbol{\theta},t)\bigg|.
$$
Expanding  $f_i(\boldsymbol{x}(t),\boldsymbol{x}(t-\boldsymbol{\tau}),\boldsymbol{\theta},t)$ around $\boldsymbol{x}(t-\boldsymbol{\tau})=\hat{\boldsymbol{x}}(t-\boldsymbol{\tau})$, 
$$\begin{aligned}&f_i(\boldsymbol{x}(t),\boldsymbol{x}(t-\boldsymbol{\tau}),\boldsymbol{\theta},t)=f_i(\boldsymbol{x}(t),\hat{\boldsymbol{x}}(t-\boldsymbol{\tau}),\boldsymbol{\theta},t)\\
    &\mbox{~~~~~}+\sum_{d=1}^m\frac{\partial f_i\left(\boldsymbol{x}(t),\boldsymbol{x}(t-\boldsymbol{\tau}),\boldsymbol{\theta},t\right)}{\partial x_d(t-\tau_d)}\big|_{x_d(t-\tau_d) =\hat{x}_d(t-\tau_d)} \cdot (x_d(t-\tau_d) -\hat{x}_d(t-\tau_d)) + err
\end{aligned}$$
Since $n^{-\nu}\log^{1/2}(n)\to 0$ as $n\to\infty$, we have $\big|\hat{x}_i(r)-x_i(r)\big| =o_p(1)$.
By the mean value theorem, there exists a $\xi_d$ lying between $x_d(t-\tau_d)$ and $\hat{x}_d(t-\tau_d)$ for $d = 1,\cdots, m$, such that 
$$
\begin{aligned}
    |err| &= \bigg|\frac{1}{2}\sum_{d=1}^m\sum_{k=1}^m\frac{\partial^2 f_i\left(\boldsymbol{x}(t),\boldsymbol{x}(t-\boldsymbol{\tau}),\boldsymbol{\theta},t\right)}{\partial x_d(t-\tau_d)\partial x_k(t-\tau_k)}\big|_{{\substack{ x_d(t-\tau_d) =\xi_d, x_k(t-\tau_k) =\xi_k}}} \\
    &\cdot (x_d(t-\tau_d)-\hat{x}_d(t-\tau_d)) \cdot(x_k(t-\tau_k)-\hat{x}_k(t-\tau_k))\bigg|.\\ 
\end{aligned}
$$
Using Condition 3 yields
$$
\begin{aligned}
    |err| 
    <&\frac{1}{2}m^2\cdot C_2\cdot\sup_{(t-\tau_d)\in [0,1]}\big| (x_d(t-\tau_d)-\hat{x}_d(t-\tau_d)) \big|\cdot \sup_{(t-\tau_k)\in [0,1]}\big|(x_k(t-\tau_k)-\hat{x}_k(t-\tau_k)) \big| \\
    =& O_P\left(n^{-2\nu}\log(n)\right).\end{aligned}
$$
Similarly, for the first-order term, we use Condition 2, obtaining
$$\begin{aligned}
    \bigg|&\sum_{d=1}^m\frac{\partial f_i\left(\boldsymbol{x}(t),\boldsymbol{x}(t-\boldsymbol{\tau}),\boldsymbol{\theta},t\right)}{\partial x_d(t-\tau_d)}\big|_{x_d(t-\tau_d)=\hat{x}_d(t-\tau_d)} \cdot (x_d(t-\tau_d)-\hat{x}_d(t-\tau_d)) \bigg|\\
    &= O_P\left(n^{-\nu}(\log(n))^{\frac{1}{2}}\right).
\end{aligned}$$

Therefore, by the triangle inequality, 
$$\begin{aligned}
&\max_{t\in\boldsymbol{I}}\bigg|f_i(\boldsymbol{x}(t),\boldsymbol{x}(t-\boldsymbol{\tau}),\boldsymbol{\theta},t)- f_i(\boldsymbol{x}(t),\hat{\boldsymbol{x}}(t-\boldsymbol{\tau}),\boldsymbol{\theta},t)\bigg|\\
&=O_P\left(n^{-\nu}(\log(n))^{\frac{1}{2}}\right) + O_P\left(n^{-2\nu}\log(n)\right)\\
&=O_P\left(n^{-\nu}(\log(n))^{\frac{1}{2}}\right).
\end{aligned}$$

\section{Proof of Theorem 2}
\label{sec:proof_linear_interpolation}
Take the settings of the space-filling design to be same as in \ref{sec:proof_conditional_expectation}. We use linear interpolation to approximate the historical output $x_i(t-\tau_i)$ if $t>\tau_i$ for $i = 1,2,\cdots, m$ as described in Section 3.2. Denote $\hat{x}_{i}(t)$ as the linear interpolated values at time points $t$ and $\hat{\boldsymbol{x}}(t-\boldsymbol{\tau})=(\hat{x}_1(t-\tau_1),\cdots, \hat{x}_m(t-\tau_m))$. Let $z_i(t) = x_i(t)-\mu_i(t)$ denote the demeaned Gaussian process. Let $\delta_{i}(t) = x_i(t) -\hat{x}_{i}(t) =  z_i(t) -\hat{z}_{i}(t)$ be the deviation process that quantifies the linear interpolation error, and $\max_{t\in[0,1]} |\delta_{i}(t)|$ denotes the uniform deviation. 
We begin by verifying two lemmas.

 \begin{lemma} 
\label{clm:1}
For any $i\in{1,2,\cdots, m}$, there exists $ L >0,~2\ge \alpha>0, $ such that $E\left[(z_i'(t)-z_i'(s))^2\right]\le L^2|t-s|^{\alpha}$. 
        
        \end{lemma}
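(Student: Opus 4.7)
The plan is to exploit the twice mean-square differentiability of $z_i$, which holds because the Matern covariance $\mathcal{K}_i$ has smoothness parameter $\nu \in \{2.01, 2.5\}$, both strictly greater than $2$. Since $\mu_i(t)$ is deterministic, $z_i(t) = x_i(t) - \mu_i(t)$ is a mean-zero Gaussian process with the same Matern covariance as $x_i$, so the statement reduces to a regularity claim about the first derivative of a stationary mean-zero GP.

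First, I would invoke the standard fact that a stationary GP with Matern covariance of smoothness parameter $\nu$ admits $k$ mean-square derivatives whenever $\nu > k$ (equivalently, when the spectral density of the kernel has finite $2k$-th moment). With $\nu > 2$, this yields existence in mean square of both $z_i'$ and $z_i''$; the latter is a stationary GP with finite variance $V := E[z_i''(t)^2]$, which can be expressed as $\mathcal{K}_i^{(4)}(0)$ when $\mathcal{K}_i$ is viewed as a function of the lag $h = t-s$ and an appropriate sign convention is used.

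Second, I would use the mean-square fundamental theorem of calculus to write
$$z_i'(t) - z_i'(s) \;=\; \int_s^t z_i''(u)\, du,$$
then apply the Cauchy--Schwarz inequality pointwise and take expectations:
$$E\!\left[(z_i'(t)-z_i'(s))^2\right] \;\le\; (t-s)\int_s^t E\!\left[z_i''(u)^2\right]\,du \;=\; V\,(t-s)^2.$$
Setting $L = \sqrt{V}$ and $\alpha = 2$ (which satisfies $0 < \alpha \le 2$) then yields the claim.

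I do not expect a substantial obstacle: the argument reduces to the Matern kernel being $C^4$ at the origin whenever $\nu > 2$. If one prefers to avoid invoking mean-square differentiability as a black box, an equivalent direct route is to use stationarity to write $E[(z_i'(t) - z_i'(s))^2] = 2\bigl(K''(t-s) - K''(0)\bigr)$ with $K(h) := \mathcal{K}_i(|h|)$, Taylor-expand $K''$ at $h = 0$ (the odd-order term vanishes by evenness of $K$) to obtain $\tfrac{1}{2} K^{(4)}(0)(t-s)^2 + o((t-s)^2)$, and extend this to a global bound on the compact interval $[0,1]$ via continuity of $K^{(4)}$; this also yields $\alpha = 2$.
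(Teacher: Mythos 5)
Your proof is correct, but it takes a genuinely different route from the paper's. The paper proves the lemma by explicit computation with a \emph{rougher} Matern kernel ($\nu = 3/2$), showing $2\bigl(\mathcal{K}_i'(0)-\mathcal{K}_i'(d)\bigr) \le L^2 d$ with an explicit $L$, hence $\alpha = 1$; it then appeals to an informal monotonicity argument ("increasing $\nu$ increases smoothness, so it suffices to verify the lemma for any smaller $\nu$") to transfer the bound to the actual cases $\nu \in \{2.01, 2.5\}$. You instead exploit the full smoothness available at $\nu > 2$: twice mean-square differentiability gives $z_i''$ with finite variance, and either the mean-square fundamental theorem of calculus plus Cauchy--Schwarz, or the Taylor expansion $K''(h) - K''(0) = \tfrac{1}{2}K^{(4)}(\xi)h^2$ (using $K'''(0)=0$ by evenness), yields $\alpha = 2$. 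Your approach buys two things: it avoids the paper's transfer-across-$\nu$ step, which is the least rigorous part of the paper's argument (a bound proved for one covariance function does not formally imply the same bound for a different one merely because the latter is "smoother"); and it gives the sharpest admissible exponent $\alpha = 2$, which when fed into Lemma 2 would yield the interpolation rate $O_P\bigl((\log n)^{1/2} n^{-2}\bigr)$ rather than the paper's $n^{-1.5}$ --- a strictly stronger conclusion that still implies Theorem 2 as stated. What the paper's route buys in exchange is a fully explicit constant $L = \sqrt{24\sqrt{3}\sigma^2/\rho^3}$ from a closed-form kernel, and the deliberately conservative $\alpha=1$ is what produces the $n^{-1.5}$ rate quoted in Theorem 2. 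One small caveat on your first route: the mean-square fundamental theorem of calculus requires $z_i''$ to be mean-square continuous, i.e., $K^{(4)}$ continuous at the origin; this does hold for $\nu = 2.01$ (the singular term in the local expansion of $K$ is of order $|d|^{2\nu} = |d|^{4.02}$, so $K^{(4)}$ behaves like a constant plus $O(|d|^{0.02})$), but it is worth stating explicitly since $\nu$ only barely exceeds $2$.
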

        \begin{proof}
           
        Since increasing $\nu$ will increase the smoothness of the Matern covariance function, to have the lemma hold for $\nu = 2.01$ and $\nu = 2.5$, it suffices to verify the lemma for the Matern covariance with any smaller $\nu$.  We will prove this lemma under a rougher covariance function, namely the Matern with $\nu = 3/2$, which is twice-differentiable so that $E\left[(z_i'(t)-z_i'(s))^2\right]$ is well-defined. Note that the Matern covariance function is $2m$ times differentiable if and only if $\nu>m$, $m\in \mathbb{N}^+$ \citep{stein2012interpolation}.

        Suppose $z_i(t)$ is a zero mean Gaussian process, with a Matern covariance of $\nu = 3/2$ that can be written as  $$\mathcal{K}_i (d) = \sigma^2 \left(1+\frac{\sqrt{3}d}{\rho}\right)\exp\left(-\frac{\sqrt{3}d}{\rho}\right),$$
        where $d = |t-s|$, and $0\le d\le 1$.               Consider
        $$\begin{aligned}
            E\left[(z_i'(t)-z_i'(s))^2\right] &= 2\cdot\left[\mathcal{K}'_i(0) -\mathcal{K}'_i(d)\right],\\
        \end{aligned}$$
         where $\mathcal{K}_i'(d)$ is the covariance function for $z_i'(t)$. 
        Since this Matern covariance function is twice-differentiable, the covariance function for $z_i'$ is $\frac{\partial^2 \mathcal{K}_i(s,t)}{\partial s\partial t} =\frac{\partial^2 \mathcal{K}_i(d)}{\partial^2 d}(d)\cdot\frac{\partial d} {\partial s}\cdot\frac{\partial d}{\partial t} = -\frac{\partial^2 \mathcal{K}_i(d)}{\partial d^2}$.
        
        If $t=s$, the inequality $L^2|t-s|^{\alpha} \ge 0 $ clearly always holds.  
        Now considering $t\ne s$, $\mathcal{K}_i'(0) = \frac{\sigma^2}{\rho^2}\cdot\frac{\nu}{\nu-1}|_{\nu = 1.5} =  3\frac{\sigma^2}{\rho^2}$, and
        $$\begin{aligned}
            2\cdot\left[\mathcal{K}'_i(0) -\mathcal{K}'_i(d)\right] =& 2\cdot\left[3\frac{\sigma^2}{\rho^2}+ \frac{\partial^2 K}{\partial d^2}\right]\\
            =&2\cdot\left[3\frac{\sigma^2}{\rho^2} -\frac{3\sigma^2e^{-\frac{\sqrt{3}d}{\rho}}(\rho-\sqrt{3}d)}{\rho^3}\right]\\
            =&6\frac{\sigma^2}{\rho^2}\left[1-e^{-\frac{\sqrt{3}d}{\rho}}\left(1-\frac{\sqrt{3}d}{\rho}\right)\right].
        \end{aligned}$$
        Let $\frac{\sqrt{3}d}{\rho} = q$, then $0<q\le \frac{\sqrt{3}}{\rho}$, and
        \begin{equation}\begin{aligned}
            6\frac{\sigma^2}{\rho^2}\left[1-e^{-\frac{\sqrt{3}d}{\rho}}\left(1-\frac{\sqrt{3}d}{\rho}\right)\right] = 6\frac{\sigma^2}{\rho^2}[1-e^{-q}\left(1-q\right)].
        \end{aligned}
    \label{eqn:substitution}
    \end{equation}
        Now let $w(q) = 1-e^{-q}\left(1-q\right) - 4\cdot q$. $w'(q)= e^{-q}(2-q)-4< e^{0}(2)-4<0$. Then $w(q)$ is monotonically decreasing when $q>0$ and $w(q)<w(0)=0$. Thus, $1-e^{-q}\left(1-q\right)< 4\cdot q$ when $q>0$. Returning to Equation \ref{eqn:substitution}, we have $6\frac{\sigma^2}{\rho^2}\cdot w(q)<0$, or equivalently, $$\begin{aligned}6\frac{\sigma^2}{\rho^2}[1-e^{-q}\left(1-q\right)]< 24\frac{\sigma^2}{\rho^2}\cdot q.
        \end{aligned}$$ 
        Therefore, we can choose $\alpha=1$ and $L = \sqrt{\frac{24\sqrt{3}\sigma^2}{\rho^3}}$ to satisfy $$  E\left[(z_i'(t)-z_i'(s))^2\right]=2\cdot\left[\mathcal{K}'_i(0) -\mathcal{K}'_i(d)\right]\le L^2|t-s|^{\alpha},$$
        which completes the proof. 
        \end{proof}

        \begin{lemma}  
    \label{clm:2}
      When Lemma \ref{clm:1} holds,   $\max_{t \in [0,1]} |\delta_{i}(t)|= O_P\left((\log(n))^{\frac{1}{2}}n^{-1-\alpha/2}\right)$. 
        
        \end{lemma}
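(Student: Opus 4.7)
My plan is to reduce the pointwise interpolation error to a local increment of the derivative process $z_i'$, and then control the uniform oscillation of $z_i'$ over all $n$ subintervals via a Gaussian modulus-of-continuity bound.

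First I would fix an interval $[t_j, t_{j+1}]$ of length $h = 1/n$. Since $\hat z_i$ is the piecewise linear interpolant of $z_i$ (equivalently, $\hat z_i = \hat x_i - \mu_i$), it coincides with $z_i$ at both endpoints, so the mean value theorem yields $\xi_j \in (t_j, t_{j+1})$ with $z_i'(\xi_j) = (z_i(t_{j+1}) - z_i(t_j))/h$, which is precisely the slope of $\hat z_i$ on this interval. Writing $z_i(t) - z_i(t_j) = \int_{t_j}^t z_i'(s)\,ds$ and $\hat z_i(t) - z_i(t_j) = (t-t_j)\,z_i'(\xi_j)$ gives the key identity
$$\delta_i(t) = \int_{t_j}^t \bigl[z_i'(s) - z_i'(\xi_j)\bigr]\,ds,$$
from which the pointwise bound $|\delta_i(t)| \le h \cdot M_j$ follows, where $M_j := \sup_{s,u \in [t_j,t_{j+1}]} |z_i'(s) - z_i'(u)|$ is the oscillation of $z_i'$ on the subinterval.

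Next I would control $\max_{1 \le j \le n-1} M_j$ using Lemma 1. Since $z_i'$ is a centered Gaussian process whose intrinsic pseudo-metric satisfies $d(s,u) := \sqrt{E[(z_i'(s)-z_i'(u))^2]} \le L|s-u|^{\alpha/2}$, each subinterval has diameter at most $L h^{\alpha/2}$ in this metric. Dudley's entropy bound yields $E[M_j] \le C_0 L h^{\alpha/2}$ for some absolute constant $C_0$ that is independent of $j$ (the bound depends only on the increment $|s-u|$), and the Borell--TIS inequality gives the Gaussian tail $P\bigl(M_j - E[M_j] > L h^{\alpha/2}\lambda\bigr) \le 2\exp(-\lambda^2/2)$. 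A union bound over the $n-1$ subintervals with $\lambda = \sqrt{4\log n}$ then yields $\max_j M_j = O_P\bigl(h^{\alpha/2}\sqrt{\log n}\bigr)$.

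Combining the pointwise estimate with this uniform oscillation bound gives $\max_{t \in [0,1]}|\delta_i(t)| \le h \cdot \max_j M_j = O_P\bigl(n^{-1-\alpha/2}(\log n)^{1/2}\bigr)$, as claimed. The main obstacle is the second step: transforming the pointwise $O_P(h^{\alpha/2})$ oscillation on each interval into a uniform $O_P(h^{\alpha/2}\sqrt{\log n})$ control across all $n$ intervals, which requires that the constants in the entropy integral and the concentration inequality not depend on $j$. This uniformity follows from the translation invariance of the bound in Lemma 1; once it is in hand, the mean value theorem decomposition in the first step immediately delivers the rate.
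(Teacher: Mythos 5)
Your proof is correct, but it takes a genuinely different route from the paper's. The paper proves this lemma by directly invoking Theorem 4 of Seleznjev (1996) --- a large-deviation bound for the piecewise-linear approximation error of a Gaussian process whose derivative satisfies the $L^2$-H\"older condition of Lemma 1 --- with the threshold $\epsilon(n)=CB_\alpha L(2\log n)^{1/2}n^{-1-\alpha/2}$, and then verifies that the resulting tail bound $C^2K_\alpha C_\alpha^2\, n(2\log n)^{1/\alpha}\Psi(u)$ vanishes as $n\to\infty$ via Mills' ratio and a logistic approximation to the normal distribution function. You instead build the bound from first principles: the mean value theorem identity $\delta_i(t)=\int_{t_j}^t[z_i'(s)-z_i'(\xi_j)]\,ds$ reduces the interpolation error on each subinterval to $h$ times the oscillation $M_j$ of $z_i'$, and Dudley's entropy bound plus Borell--TIS plus a union bound over the $n$ subintervals give $\max_j M_j=O_P(h^{\alpha/2}\sqrt{\log n})$, which makes completely transparent where the $\sqrt{\log n}$ factor originates. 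Your argument is self-contained modulo standard Gaussian concentration tools and arguably cleaner than the paper's verification step, which leans on a non-rigorous numerical approximation of $\Phi$. The one point you should make explicit is that the mean value theorem and the representation $z_i(t)-z_i(t_j)=\int_{t_j}^t z_i'(s)\,ds$ require almost-sure $C^1$ sample paths for $z_i$; this holds for the Matern kernels considered (indeed the entropy bound you derive from Lemma 1 already yields a continuous version of $z_i'$), and the paper implicitly assumes the same regularity, so this is a presentational omission rather than a substantive gap.
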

        
        \begin{proof}
        
       Let $u = C(2\log(n))^{\frac{1}{2}}$.  According to Theorem 4 from \citet{seleznjev1996large} and Lemma \ref{clm:1}, for any $u>2$, we will have that 
    $$
        \begin{aligned}
            &P\left(\max_{t\in[0,1]} |\delta_{i}(t)|>\epsilon(n)\right)\le  C^{2} K_\alpha C_\alpha^{2} n (2\log(n))^{1/\alpha}\Psi(u)\\
        \end{aligned}
$$
   where $ B_\alpha=2^{-\alpha / 2}, C_\alpha=2^{1+\alpha / 2} $, and  $K_\alpha=2 \exp (5 \cdot 2 / \alpha)(2 \alpha)^{1 / \alpha}$. $\Psi(u) = \phi(u)/u$, where $\phi(u)$ is the standard normal density function.

   By setting $\epsilon(n) = CB_\alpha L(2\log(n))^{\frac{1}{2}}n^{-1+\alpha/2}$ \citep{seleznjev1996large}, where $C$ is an arbitrary constant larger than 1,   \begin{equation}
    \begin{aligned}
        &P\left(\max_{t\in[0,1]} |\delta_{i}(t)|>\epsilon(n)\right)\le  C^{2} K_\alpha C_\alpha^{2} n (2\log(n))^{1/\alpha}\Psi(u)\\
        &\Leftrightarrow P\left(\frac{\max_{t\in [0,1]} |\delta_{i}(t)|}{(\log(n))^{\frac{1}{2}}n^{-1-\alpha/2}}>\sqrt{2}  CB_\alpha L\right)\le C^{2} K_\alpha C_\alpha^{2} n (2\log(n))^{1/\alpha}\Psi(u),\\
    \end{aligned}
    \label{eqn:lemma_1996}
\end{equation}
   and we further investigate the asymptotic behavior of $C^{2} K_\alpha C_\alpha^{2} n (\log(n))^{1/\alpha}\Psi(u)$, or simply $ n (\log(n))^{1/\alpha}\Psi(u)$ after dropping the constant terms.
   
   Note that $\Psi(u)=\phi(u)/u$ is asymptotically equivalent to $\Phi(u)$, where $\Phi(u)$ is the standard normal distribution function, according to the asymptotic behavior of Mill's ratio of the standard normal distribution \citep{wainwright2019high}.  Thus, we alternatively explore the asymptotic behavior of  $n (\log(n))^{1/\alpha}(1-\Phi(u))$.  Using the approximation of $\Phi(u)$ from \citet{bowling2009logistic}  yields 
   $$
    \begin{aligned}n (\log(n))^{1/\alpha} (1-\Phi(u))&\approx n \log(n) \left(1-\frac{1}{1+\exp(-1.5976u-0.07056u^3)}\right)\\
    &=\frac{n (\log(n))^{1/\alpha}}{1+\exp(1.5976u+0.07056u^3)}.
\end{aligned}   
$$
        Replacing $u = C(2\log(n))^{\frac{1}{2}}$ the equation above becomes 
        $$\frac{n (\log(n))^{1/\alpha}}{1+\exp\left(1.5976 C(2\log(n))^{\frac{1}{2}}+0.07056 C(2\log(n))^{\frac{3}{2}}\right)}$$
        By L'Hospital's rule and some algebra, $$\begin{aligned}&\lim_{n\to\infty}\frac{n (\log(n))^{1/\alpha}}{1+\exp\left(1.5976 C(2\log(n))^{\frac{1}{2}}+0.07056 C(2\log(n))^{\frac{3}{2}}\right)}\\
        &=\frac{6250\sqrt{2}n\left(\alpha\log\left(n\right)+1\right)}{C\alpha{\log\left(n\right)}^{\frac{\alpha-2}{2\alpha}}\exp{\left(\frac{\sqrt{2}C\sqrt{\log\left(n\right)}\left(882\log\left(n\right)+9985\right)}{6250}\right)}\left(2646\log\left(n\right)+9985\right)}\\
        &= \frac{6250\sqrt{2}\left(\alpha\log\left(n\right)+1\right)}{C\alpha{\log\left(n\right)}^{\frac{\alpha-2}{2\alpha}}\exp(\frac{9985}{6250})n^{\sqrt{2}C\frac{882}{6250}\sqrt{\log\left(n\right)}-1}\left(2646\log\left(n\right)+9985\right)}\\
        &=0.
        \end{aligned} $$
      As $n\to \infty$, $n (\log(n))^{1/\alpha} (1-\Phi(u))\to 0$, and $C^{2} K_\alpha C_\alpha^{2} n (2\log(n))^{1/\alpha}\Psi(u)\to 0$.
        Then, for any $\epsilon>0$, there exists an $N_1$, such that for all $n> N_1$, we have $C^{2} K_\alpha C_\alpha^{2} n (2\log(n))^{1/\alpha}\Psi(u)<\epsilon$. Denote $M=\sqrt{2}  CB_\alpha L<\infty$ and $a_n = (\log(n))^{\frac{1}{2}}n^{-1-\alpha/2}$. For all $n>N_1$, $$P\left(\bigg|\frac{\max_{t\in[0,1]} |\delta_{i}(t)|}{a_n}\bigg|>M\right)\le C^{2} K_\alpha C_\alpha^{2} n (2\log(n))^{1/\alpha}\Psi(u)<\epsilon.$$
        Therefore, by definition, $\max_{t\in[0,1]} |\delta_{i}(t)|= O_P\left((\log(n))^{\frac{1}{2}}n^{-1-\alpha/2}\right)$, which completes the proof of this part. 
        
        \end{proof}

     Using Lemma \ref{clm:1} and \ref{clm:2}, we can now show that Theorem 2 holds. 
        \begin{proof}

        Consider the uniform deviation $$\begin{aligned}
            &\max_{t\in \boldsymbol{I}}\bigg|f_i\left(\boldsymbol{x}(t), \boldsymbol{x}(t-\boldsymbol{\tau}),\boldsymbol{\theta}, t\right)-f_i\left(\boldsymbol{x}(t), \hat{\boldsymbol{x}}(t-\boldsymbol{\tau}),\boldsymbol{\theta}, t\right)\bigg|.\\
        \end{aligned}
    $$
        Expanding $f_i\left(\boldsymbol{x}(t), \boldsymbol{x}(t-\boldsymbol{\tau}),\boldsymbol{\theta},t\right)$ around $ \boldsymbol{x}(t-\boldsymbol{\tau}) = \hat{\boldsymbol{x}}(t-\boldsymbol{\tau})$, we obtain
        $$\begin{aligned}f_i\left(\boldsymbol{x}(t),\boldsymbol{x}(t-\boldsymbol{\tau}),\boldsymbol{\theta},t\right)&=f_i\left(\boldsymbol{x}(t),\hat{\boldsymbol{x}}(t-\boldsymbol{\tau}),\boldsymbol{\theta},t\right)\\
            &+\sum_{d=1}^m\frac{\partial f_i\left(\boldsymbol{x}(t),\boldsymbol{x}(t-\boldsymbol{\tau}),\boldsymbol{\theta},t\right)}{\partial x_d(t-\tau)}\big|_{x_d(t-\tau_d) = \hat{x}_{d}(t-\tau_d)}\cdot (x_d(t-\tau_d)-\hat{x}_{d}(t-\tau_d)) + err
        \end{aligned}$$
        Clearly, as $n\to \infty$, $(\log(n))^{\frac{1}{2}}n^{-1-\alpha/2}\to 0$. For any $t\in [0,1]$, $|x_i(t)-\hat{x}_{i}(t)|\le \max_{t\in[0,1]} |\delta_{i}(t)|$, and $|x_i(t)-\hat{x}_{i}(t)|=o_p(1).$
        
        By the mean value theorem, there exists a $\xi_d$ for $d = 1,2,\cdots ,m$, lying between $x_d(t-\tau_d)$ and $\hat{x}_{d}(t-\tau_d)$ such that $$\begin{aligned}
            |err| &= \bigg|\frac{1}{2}\sum_{d=1}^m\sum_{k=1}^m\frac{\partial^2 f_i(\boldsymbol{x}(t),\boldsymbol{x}(t-\boldsymbol{\tau}),\boldsymbol{\theta},t)}{\partial x_d(t-\tau_d)\partial x_k(t-\tau_k)}\big|_{{\substack{ x_d(t-\tau) =\xi_d, \\x_k(t-\tau) =\xi_k}}}\cdot (x_d(t-\tau_d)-\hat{x}_{d}(t-\tau_d))(x_k(t-\tau_k)-\hat{x}_k(t-\tau_k))\bigg|\\
        &\le \bigg|\frac{1}{2}\sum_{d=1}^m\sum_{k=1}^m\frac{\partial^2 f_i(\boldsymbol{x}(t),\boldsymbol{x}(t-\boldsymbol{\tau}),\boldsymbol{\theta},t)}{\partial x_d(t-\tau_d)\partial x_k(t-\tau_k)}\big|_{{\substack{ x_d(t-\tau_d) =\xi_d, \\x_k(t-\tau) =\xi_k}}}\bigg|\cdot \max_{t-\tau_d\in[0,1]} |\delta_{d}(t-\tau_d)|\cdot \max_{t-\tau_k\in[0,1]} |\delta_{k}(t-\tau_k)|\\
        \end{aligned}$$
        After using Condition 3, the error term will be      
        $$\begin{aligned}
            |err|  &< \frac{1}{2}\cdot m^2\cdot C_2 \cdot \max_{t\in[0,1]} |\delta_{d}(t)|\cdot \max_{t\in[0,1]} |\delta_{k}(t)|\\
        &= O_P\left(\log(n)\cdot n^{-2-\alpha}\right).\end{aligned}$$
        Similarly, for the first-order term, we apply Condition 2,
        $$\begin{aligned}
            &\bigg|\sum_{d=1}^m\frac{\partial f_i\left(\boldsymbol{x}(t),\boldsymbol{x}(t-\boldsymbol{\tau}),\boldsymbol{\theta},t\right)}{\partial x_d(t-\tau_d)}\big|_{x_d(t-\tau_d) = \hat{x}_{d}(t-\tau_d)}\cdot (x_d(t-\tau_d)-\hat{x}_{d}(t-\tau_d)) \bigg|\\
            &\le \bigg| \sum_{d=1}^m\frac{\partial f_i\left(\boldsymbol{x}(t),\boldsymbol{x}(t-\boldsymbol{\tau}),\boldsymbol{\theta},t\right)}{\partial x_d(t-\tau_d)}\big|_{x_d(t-\tau) = \hat{x}_{d}(t-\tau_d)} \cdot \max_{t-\tau_d\in [0,1]} |\delta_{d}(t-\tau_d)|\bigg| \\
            &= O_P\left((\log(n))^{\frac{1}{2}}n^{-1-\alpha/2}\right).
        \end{aligned}$$

Therefore, for any $i\in\{1,2,\cdots,m\}$, by the triangle inequality,
$$\begin{aligned}
   &\max_{t\in \boldsymbol{I}} \bigg|f_i(\boldsymbol{x}(t),\boldsymbol{x}(t-\boldsymbol{\tau}),\boldsymbol{\theta}, t)- f_i(\boldsymbol{x}(t),\hat{\boldsymbol{x}}(t-\boldsymbol{\tau}),\boldsymbol{\theta}, t)\bigg|\\
    &=O_P\left((\log(n))^{\frac{1}{2}}n^{-1-\alpha/2}\right).
\end{aligned}$$
Finally, since we have shown that taking $\alpha = 1$ satisfies Lemma \ref{clm:1}, the proof is complete.
\end{proof}

\section{Numerical Validation of Approximation Schemes}
\label{sec:Numerical Validation}

The theoretical results of Section 3.2 indicate that MAGIDDE with approximation schemes for the historical outputs in Equations (9) and (10) is reasonable provided that the discretization set $\boldsymbol{I}$ is sufficiently dense. This section provides numerical validation, by examining the effect of $|\boldsymbol{I}|$ on the parameter estimates, inferred trajectories, and computation time under the three schemes for historical outputs described in Section 3: (i) fully Bayesian framework, (ii) conditional expectation, (iii) linear interpolation. 

We generate 100 simulated datasets based on the setup for the log-transformed Hutchinson's equation, given by 
\begin{equation*}
dN(t)/dt = r\left[1-\exp(N(t-\tau))/(1000\cdot K)\right].
\end{equation*}
More details about this benchmark system are provided in Section 4.1. For numerical validation, we  consider the scenario with 16 sparse and equally-spaced observation time points $\boldsymbol{\gamma} = \{0, 2,4, \cdots,30\}$.
We vary the denseness of $|\boldsymbol{I}|$ by adding 0, 1, 3, 7 equally-spaced discretization points between each pair of adjacent observation time points, so that the cardinalities of the corresponding discretization sets are $|\boldsymbol{I}| = \{16, 31, 61, 121\}$. For the purpose of creating a simple experiment to compare schemes and discretization sets, 
in this section only we treat $\sigma^2$ as known, set the starting values for $r, K, \tau$, $\boldsymbol{x}(\boldsymbol{I})$ at the truth, and choose $\nu = 2.5$ for the Matern covariance for faster computation. 
Uniform priors over $[0,5]$ are placed on $r$, $K$, and $\tau$. We set the Matern covariance hyper-parameters $\tilde{\phi}_{1,1}$ and $\tilde{\phi}_{1,2}$ as described in Section 3.3.

For this experiment, the posterior distributions for the fully Bayesian framework, linear interpolation scheme and conditional expectation scheme are all implemented in $Stan$ \citep{carpenter2017stan}, 
which uses NUTS (No-U-Turn sampler, \citet{hoffman2014no}) as the default sampling algorithm. For each scheme, we
run 10,000 MCMC iterations with the first 5,000 discarded as burn-in. 

 \ref{tab:parameter_tab_numerical_validations} summarizes the parameter estimates from the three schemes, across the 100 simulated datasets. We visualize the parameter estimates and RMSEs between the inferred trajectories and the truth for each simulation run using side-by-side boxplots in \ref{fig:parameter_boxplot_numerical_validations} and \ref{fig:rmse_boxplot_numerical_validations}. 
First, considering the effect of $|\boldsymbol{I}|$, the results from all three schemes show a clear decreasing trend in both RMSEs of parameters and inferred trajectories as $|\boldsymbol{I}|$ increases from 16 to 61. At $|\boldsymbol{I}| = 61$, all three schemes provide similarly low RMSEs of the parameters and inferred trajectories. A further increase to $|\boldsymbol{I}| = 121$ yields largely similar parameter estimates as $|\boldsymbol{I}| = 61$ in \ref{fig:parameter_boxplot_numerical_validations}. These results indicate that all the schemes, including fully Bayesian, require $|\boldsymbol{I}| = 61$ as a sufficiently dense discretization set to achieve stable inference for this problem. Second, we consider the effect of the approximation schemes for the historical outputs. When $|\boldsymbol{I}| = 16$, the fully Bayesian framework outperforms the two approximation schemes in recovering the system, and linear interpolation shows significant deviation from the other schemes. When $|\boldsymbol{I}| = 31$, the conditional expectation scheme already performs similarly to fully Bayesian, while linear interpolation still has slightly higher error. This suggests the convergence rate of conditional expectation is faster than linear interpolation, as supported by the results of Theorems 1 and 2. Once $|\boldsymbol{I}|$ reaches 61, the inferences from the three schemes become substantively identical.
This simulation result corroborates the theory, that an increase in the number of discretization points will lead to a more accurate approximation of $f_i(\boldsymbol{x}(t),\hat{\boldsymbol{x}}(t-\boldsymbol{\tau}),\boldsymbol{\theta},t)$, and in turn the posterior distributions of $r, K, \tau$, $\boldsymbol{x}(\boldsymbol{I})$ also become largely indistinguishable between the fully Bayesian and approximate schemes.  Third, computation times follow an expected pattern: linear interpolation scales 
most favorably with $|\boldsymbol{I}|$, while the other schemes become increasingly slow for larger $|\boldsymbol{I}|$. 
While approximation via conditional expectation is about one to two orders of magnitude faster than the fully Bayesian framework, both are hindered by the dense matrix operations required. For $|\boldsymbol{I}| = 61$, which would be recommended for this problem to achieve stable inference (regardless of the scheme used), linear interpolation is about 200 times faster than fully Bayesian and 4 times faster than conditional expectation. This suggests the significantly faster computation speed of linear interpolation is a worthwhile trade-off in practice, even though conditional expectation has a faster theoretical convergence rate.

Therefore, we recommend MAGIDDE with the linear interpolation scheme for practical implementation of the method. To check whether a chosen $\boldsymbol{I}$ is sufficiently dense in practice, we may insert one equally-spaced point between each adjacent pair of existing points in $\boldsymbol{I}$, and re-run MAGIDDE to ensure that the estimates obtained are stable.

\begin{table}[!htbp]
\caption{Supplementary Table 1: Average parameter estimates (with RMSEs over 100 simulated datasets in parentheses) and average running time (in minutes, on a single CPU core) for the log-transformed Hutchinson's equation, comparing the three schemes for historical outputs. Bold highlights the best scheme with respect to the lowest values of parameter RMSEs and shortest running time.}

\resizebox{\textwidth}{!}{
\centering
\begin{tabular}{@{}rlrrrrr@{}}
\toprule
\multicolumn{1}{c}{$|\boldsymbol{I}|$}       & Scheme                  & \multicolumn{1}{c}{$r$} & \multicolumn{1}{c}{$K$} & \multicolumn{1}{c}{$\tau$} & \multicolumn{1}{c}{$N(0)$} & \multicolumn{1}{c}{Runtime}  \\ \midrule
\multirow{1}{*}{16}  & Fully Bayesian          & 0.769($\mathbf{0.033}$)              & 1.990($\mathbf{0.068}$)              & 2.976($\mathbf{0.044}$)                 & 8.128($\mathbf{0.085}$)   &       0.61                             \\
                           & Conditional Expectation & 0.770(0.033)              & 2.046($0.082$)              & 2.878(0.132)                 & 8.125(0.087)   &    0.06                              \\
                           & Linear Interpolation    & 0.850(0.051)              & 1.741(0.265)              & 3.171(0.174)                 & 8.033(0.144)    &       $ \mathbf{0.03}$                       \\ \\
\multirow{1}{*}{31}  & Fully Bayesian          & 0.795(0.008)              & 1.977(0.067)              & 3.014($\mathbf{0.025}$)                 & 8.117($\mathbf{0.063}$)    &     66.7                              \\
                           & Conditional Expectation & 0.795(0.008)              & 1.977($\mathbf{0.067}$)              & 3.015(0.026)                 & 8.116(0.064)   &    0.29                           \\
                           & Linear Interpolation    & 0.798($\mathbf{0.007}$)              & 1.967(0.071)              & 3.027(0.035)                 & 8.109(0.069)   &       $\mathbf{0.11}$                     \\ \\
\multirow{1}{*}{ 61}  & Fully Bayesian          & 0.799($0.004$)              & 1.982(0.064)              & 3.005($\mathbf{0.013}$)                 & 8.140(0.045)       &    251.9                                  \\
                           & Conditional Expectation & 0.799(0.004)              & 1.982($\mathbf{0.064}$)              & 3.005(0.013)                 & 8.139($\mathbf{0.045}$)     &      4.50                       \\
                           & Linear Interpolation    & 0.800($\mathbf{0.004}$)              & 1.979(0.064)              & 3.007(0.014)                 & 8.137(0.045)   &       $\mathbf{1.14}$                        \\ \\
\multirow{1}{*}{121} & Fully Bayesian     & 0.799(0.004)              & 1.986(0.064)              & 3.003($\mathbf{0.012}$)                 & 8.148(0.036)          &    1295.3           \\
                           & Conditional Expectation & 0.799($0.004$)              & 1.985($0.062$)              & 3.003(0.012)                 & 8.148($0.036$)        &       42.2                   \\
                           & Linear Interpolation    & 0.800($\mathbf{0.004}$)              & 1.985($\mathbf{0.062}$)              & 3.004(0.012)                 & 8.147($\mathbf{0.036}$) &      $\mathbf{8.65}$                                   \\ \bottomrule
\end{tabular}
}

\label{tab:parameter_tab_numerical_validations}
\end{table}

\begin{figure}[!htbp]
    \centering
    \includegraphics[width = \textwidth]{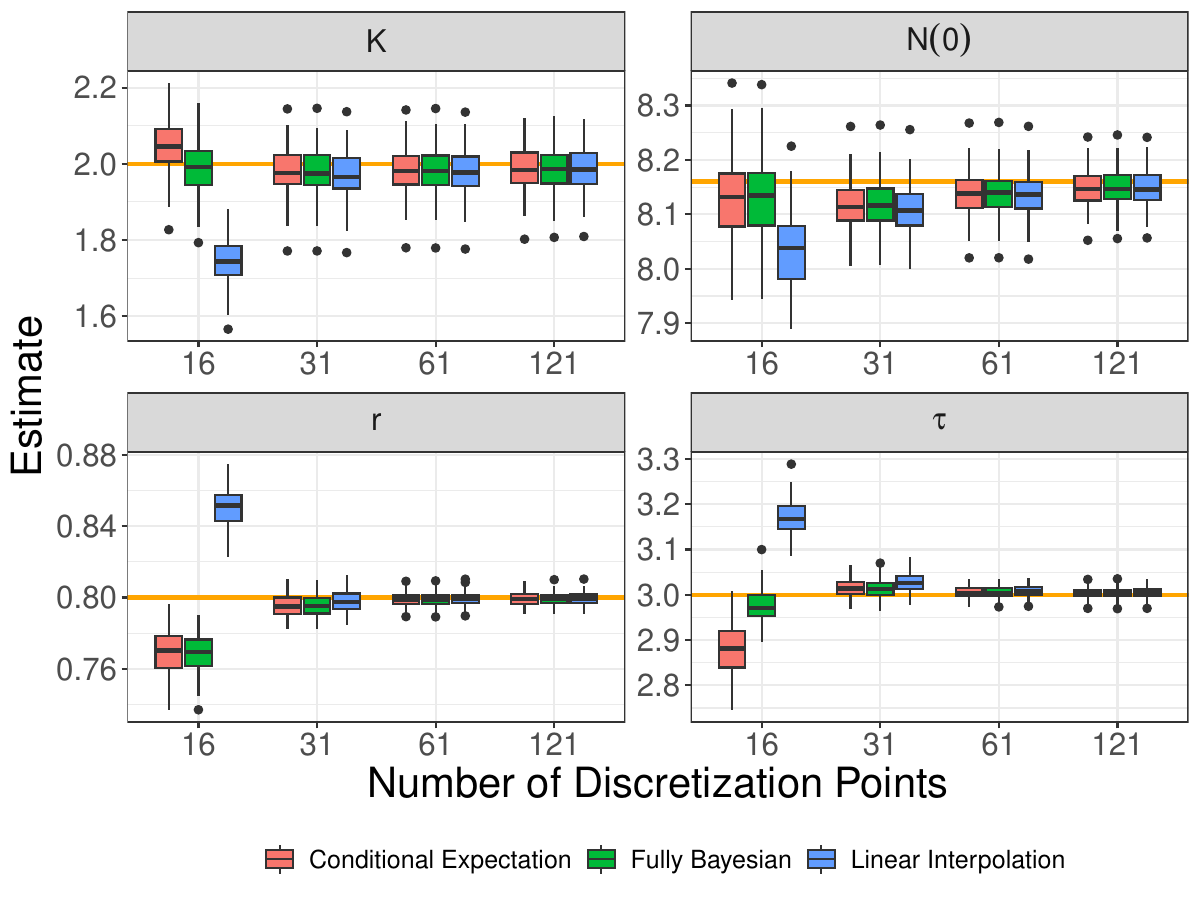}
    \caption{Supplementary Figure 1: Boxplots of parameter estimates for 100 simulated datasets from the log-transformed Hutchinson's equation, comparing the three schemes for historical outputs. The horizontal orange line indicates the true value of the corresponding parameter.}
    \label{fig:parameter_boxplot_numerical_validations}
\end{figure}

\begin{figure}[!htbp]
    \centering
    \includegraphics[width = \textwidth]{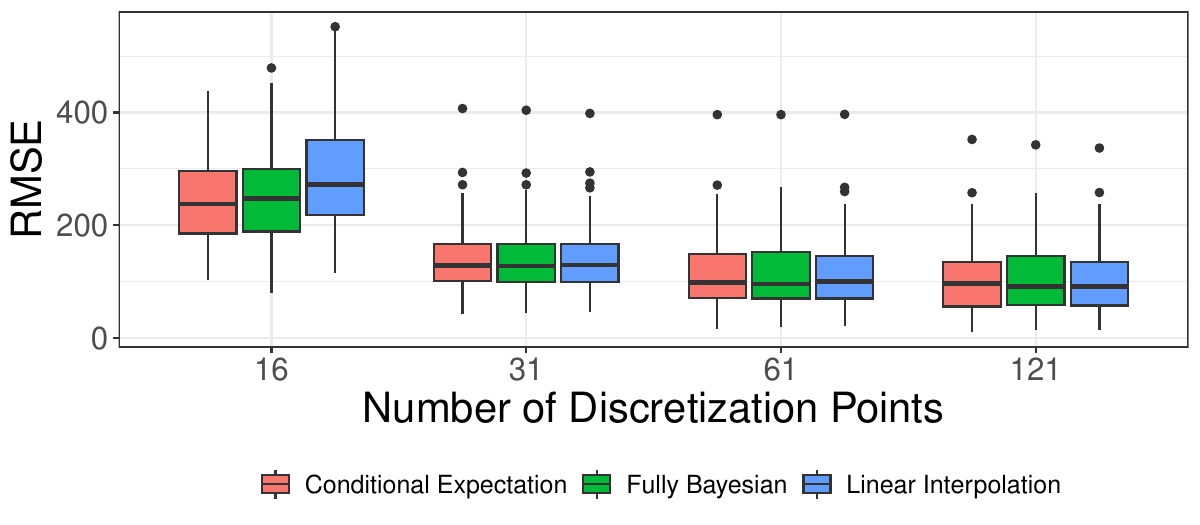}
    \caption{Supplementary Figure 2: Boxplots of RMSEs between the inferred trajectories and the truth, comparing the three schemes for historical outputs over 100 simulated datasets from Hutchinson's equation.}
    \label{fig:rmse_boxplot_numerical_validations}
\end{figure}

\section{Method Comparison Details}
We provide the implementation details of MAGIDDE, deBInfer, and SMCDE as described in Section 4.1. To visualize the performance of each method, we present side-by-side boxplots of parameter estimates and trajectory RMSEs. 
\subsection{Implementation}
For MAGIDDE, we use a Matern covariance with $\nu =2.5$ and an equally-spaced discretization set with $|\boldsymbol{I}| = \max \left(61, |\boldsymbol{\gamma}|\right)$ that is sufficiently dense for all four scenarios, as suggested by Appendix E. 
We obtain the starting values for HMC sampling and estimate the GP hyperparameters as described in Section 3.3. Uniform priors over the positive real numbers are assigned to $K$, $r$, $\tau$, and $\sigma$. 
We run 40,000 HMC iterations (each with 20 leapfrog steps), with the first 20,000 samples discarded as burn-in. 

For deBInfer, we construct a normal likelihood function for noisy observations on the log-scale. 
While we use uniform priors, we restrict their ranges as follows: 
$K$, $r$, and $\tau$ are uniform on the interval $[0,5]$, $N(0)$ is uniform on $[5,10]$, and $\sigma$ is uniform on $[0,0.2]$. Without such restrictions, deBInfer frequently cannot converge to a reasonable result; deBInfer employs 
a variant of the random-walk Metropolis algorithm with a ``reflection'' mechanism to handle the boundaries of the parameter space. 
The proposal variance of each parameter is tuned to ensure that convergence is generally achieved in the simulation runs, and we run 40,000 MCMC iterations with the first 20,000 samples discarded as burn-in.

For SMCDE, we follow the authors' recommendations when running the code. Specifically, one knot is placed at every two time units, the normalized conditional effective sample size threshold is set as 0.9, and resampling threshold is set as 0.5. We use the priors suggested by the authors when they resemble flat priors: $K$ and $r$ are 
$N(0, 5^2)$ restricted to be positive, and $\tau$ is uniform on $[0,50]$. 
For fair comparison, we adjusted the authors' default inverse-gamma$(1,1)$ prior for $\sigma^2$ to be more uniform, with shape parameter 0.001 and scale parameter 0.001. We found that this flatter prior actually led to more accurate parameter estimates, compared to the authors' default. Note that SMCDE expects conjugate priors for some parameters, and these cannot be easily adjusted to be strictly uniform.
\subsection{Supplementary Boxplots for Method Comparison}

 \ref{fig:model_comparison_parameter} and \ref{fig:model_traj_rmse} respectively display the side-by-side boxplots of parameter estimates and trajectory RMSEs for 300 simulated datasets from Hutchison's equation, as obtained from MAGIDDE, deBInfer, and SMCDE. Among these methods, MAGIDDE has the most favorable performance in estimating the parameters and reconstructing the trajectory. The deBInfer method is hindered by a number of outliers where its MCMC sampler did not converge to the correct parameter values, despite tuning the proposal distribution; its trajectory RMSEs also have large outliers as a result. SMCDE exhibits some amount of bias in recovering the system, notably in its estimates for $r$ and $\tau$ when the observations are sparse.

\begin{figure}[!htbp]

\caption{Supplementary Figure 3: Boxplots of parameter estimates for 300 simulated datasets from the log-transformed Hutchinson's equation, comparing the MAGIDDE, deBInfer, and SMCDE methods. The horizontal orange line indicates the true value of the corresponding parameter. The horizontal orange line represents the corresponding true values.}
\centering
\includegraphics[width = \textwidth]{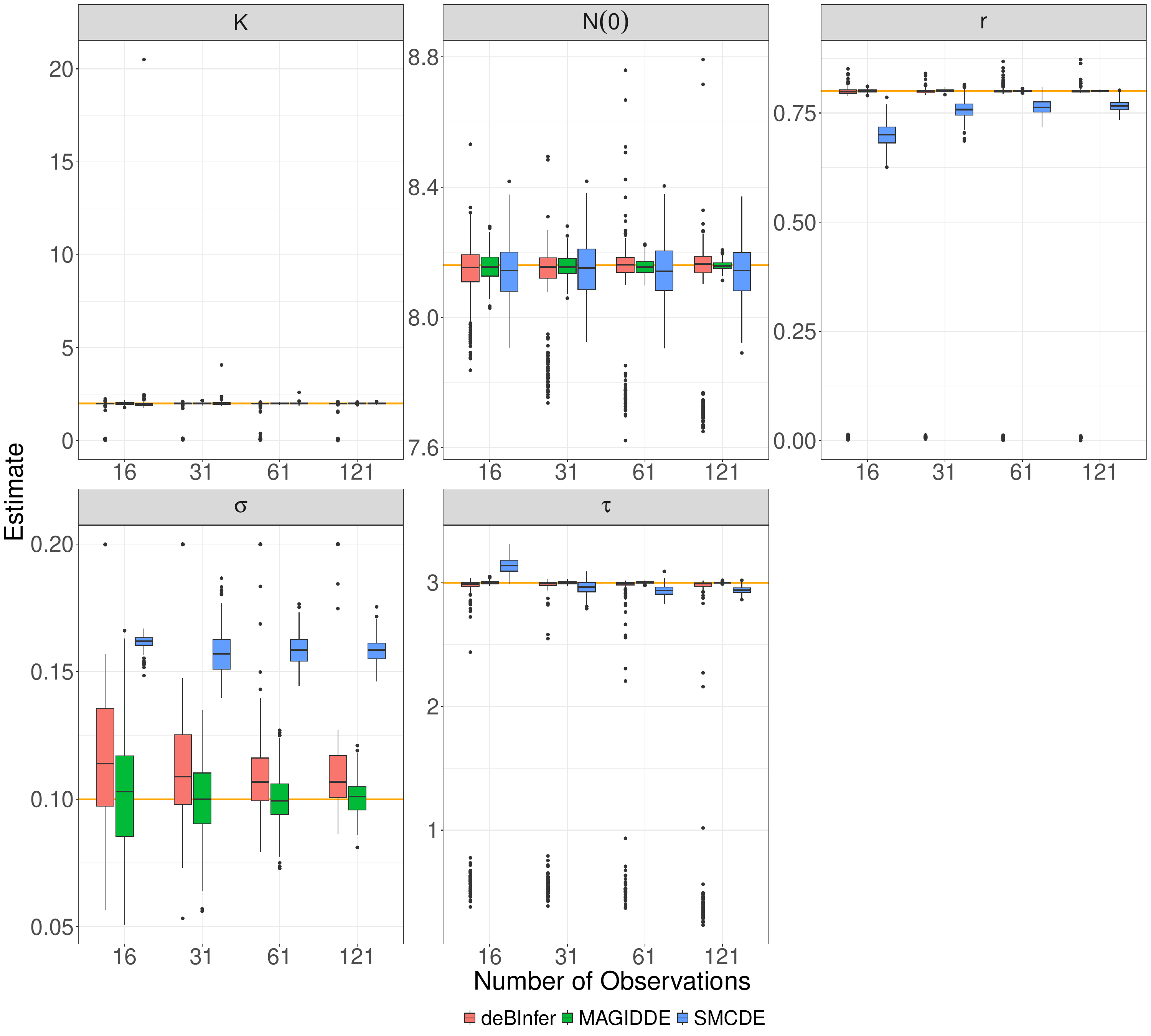}
\label{fig:model_comparison_parameter}
\end{figure}
\begin{figure}[!htbp]
\caption{Supplementary Figure 4: Boxplots of trajectory RMSEs for 300 simulated data sets from Hutchinson's equation, comparing the MAGIDDE, deBInfer, and SMCDE methods.}
\includegraphics[width = \textwidth]{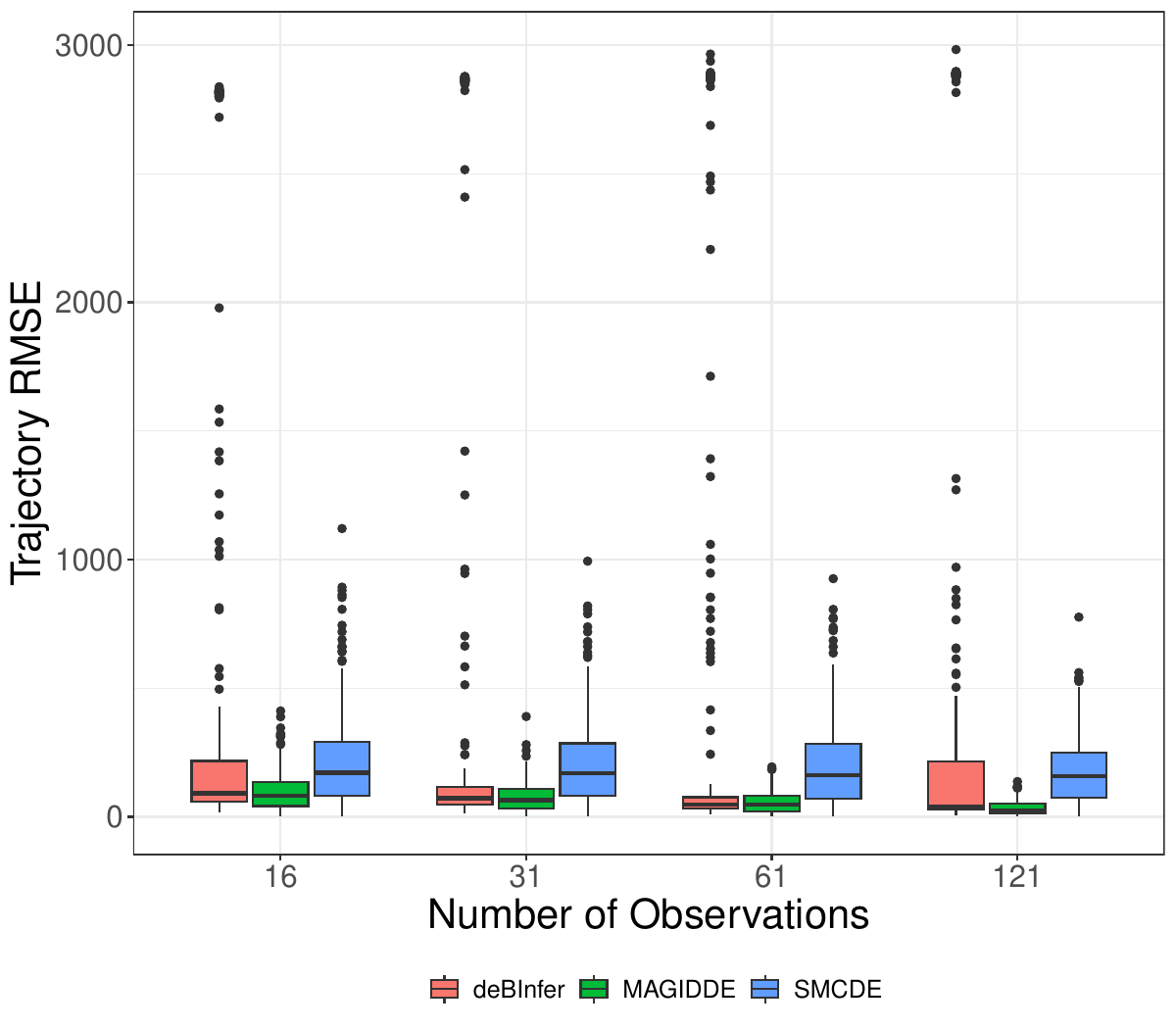}
\label{fig:model_traj_rmse}
\end{figure}

\section{Inferred Trajectory for Hutchison's Equation from 61 Observations using MAGIDDE}

Analogous to Figure 3 in the main text, \ref{fig:infer_traj_61} shows that MAGIDDE well-recovers the true underlying trajectory of Hutchison's equation in the 61 observation scenario; it can be seen that the 95$\%$ pointwise credible interval is narrower with these denser observations, compared to the scenario with 16 observations in Figure 3 of the main text.
\begin{figure}[!htbp]
\centering\includegraphics[width = \textwidth]{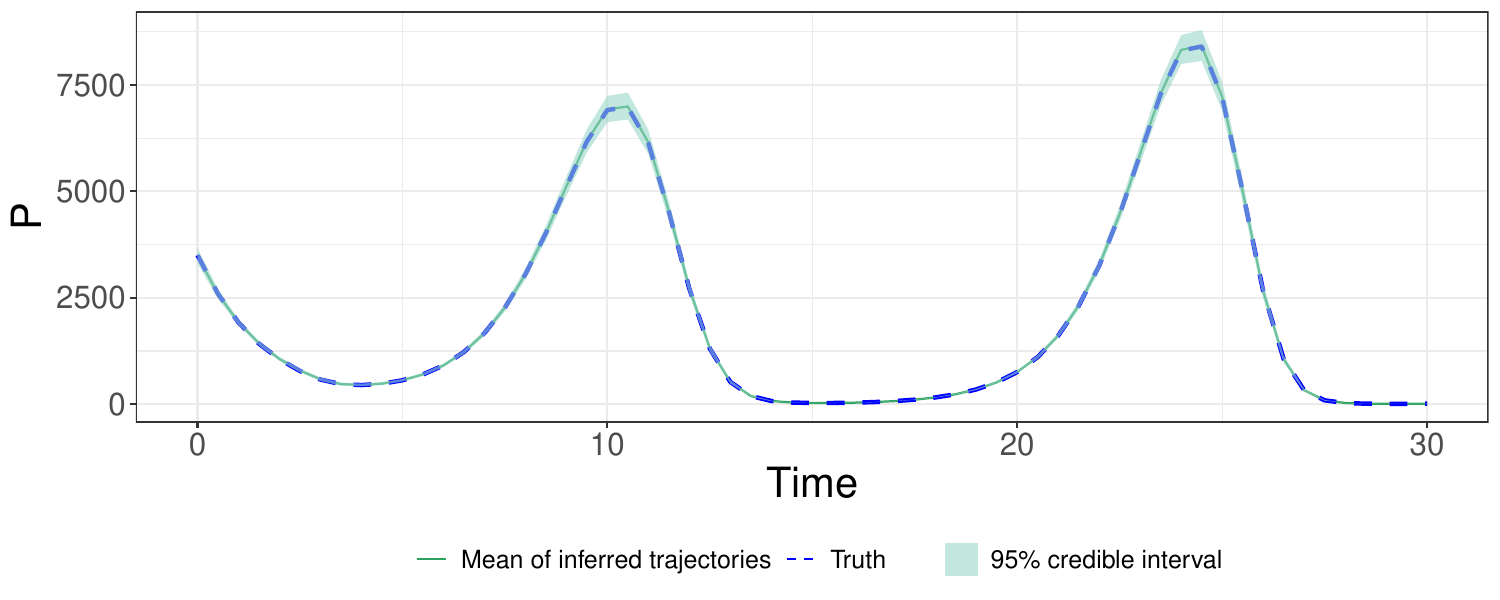}
    \caption{Supplementary Figure 5: Inferred trajectory of Hutchinson's equation from 61 observations using MAGIDDE. The green solid line represents the mean inferred trajectory over 300 simulated datasets, and the blue dashed line is the truth. The green shaded area is the 95$\%$ pointwise credible interval, constructed by taking the average 0.025 and 0.975 quantiles of the inferred trajectories across the simulated datasets. }
    \label{fig:infer_traj_61}
\end{figure}

\section{Summary of Parameters in $Lac$ Operon Model}

\ref{tab:lac_param} summarizes all of the parameters and values in the $lac$ operon model. We estimate the parameters in the first column and others are treated as known from \citet{yildirim2003feedback}.  

\begin{table}[!htbp]
\caption{Supplementary Table 2: Parameters in the $lac$ operon model.}
\begin{center}
\begin{tabular}{@{}crcrcr@{}}
\toprule
Parameter  & \multicolumn{1}{c}{Value} & Parameter  & Value                  & Parameter     & Value                 \\ \midrule
$\tau_B$   & 2                         & $K_1$      & $2.52 \times 10 ^{4}$  & $\beta_{L_1}$ & $2.65 \times 10 ^{3}$ \\
$\tau_M$   & 0.1                       & $K$        & 7200                   & $K_{L_1}$     & 1.81                  \\
$\tau_P$   & 0.83                      & $n$        & 2                      & $\beta_{L_2}$ & $1.76 \times 10 ^{4}$ \\
$\gamma_A$ & 0.52                      & $\Gamma_0$ & $7.25 \times10 ^{-7}$  & $\gamma_L$    & 0                     \\
$\alpha_M$ & $9.97\times 10^{-4}$      & $\gamma_M$ & 0.411                  & $\gamma_P$    & 0.65                  \\
$\alpha_B$ & 0.0166                    & $\gamma_B$ & $8.33 \times 10 ^{-4}$ &               &                       \\
$\alpha_P$ & 10                        & $\alpha_A$ & $1.76 \times 10 ^{4}$  &               &                       \\
$\mu$      & 0.0226                    & $K_L$      & 0.97                   &               &                       \\
$M(0)$     & $6.26\times10^{-4}$       & $\beta_A$  & $2.15 \times 10 ^{4}$  &               &                       \\
$B(0)$     & 0                         & $K_A$      & 1.95                   &               &                       \\
$A(0)$     & 0.038                     & $\alpha_L$ & 2880                   &               &                       \\
$L(0)$     & 0.372                     & $L_e$      & 0.08                   &               &                       \\
$P(0)$     & 0.0149                    & $K_{L_e}$  & 0.26                   &               &                       \\ \bottomrule
\end{tabular}%
\end{center}%
\label{tab:lac_param}
\end{table}

\section{Data Processing for Time-delayed SIRD Model}
In this section, we describe the procedures for processing the Ontario COVID-19 data to obtain noisy observations for the time-delayed SIRD model. We also specifically detail the steps for obtaining the initial observation of each compartment.   
\subsection{Data Processing Steps}
We outline the data processing steps for obtaining noisy observations of the population size in the 
compartments on each day. 
These are then converted into proportions by dividing by the Ontario population of 14,999,441 in the first quarter of 2022 \citep{canadapopulation}. Following \citet{zhao2023comparative}, we approximate the daily population size in the $I$ compartment by dividing the number of hospitalized COVID-19 cases \citep{patients_in_hospital} by the overall COVID-19 hospitalization rate. 
We convert the weekly death counts \citep{death} into daily death counts by linear interpolation; then, the population in the $D$ compartment on a given day is approximated by the cumulative sum of daily death counts up to that day. The population in the $S$ compartment on a given day is estimated by subtracting the daily confirmed cases \citep{daily_confirmed_cases} from the previous day's $S$ compartment population. Given these $S,I,D$ population estimates for each day, $R$ then consists of the remainder of the population. 
The data setup further requires the population of each compartment on the initial day of the observation period; this step is detailed in the next subsection.

\subsection{Initial Noisy Observations for the Time-Delayed SIRD Model}

The first wave of COVID-19 driven by the Omicron variant began in Ontario, Canada on January 7, 2022. This wave was marked by a record-high number of hospital admissions \citep{cbc_news}. Due to the high reinfection rate of the Omicron variant, previously infected individuals remained susceptible; hence we assume the flow from the $S$ compartment to the $I$ compartment starts from January 6 and that there are no deaths related to the Omicron variant prior to January 7. We use the following procedure to approximate the population in the $S,I,R,D$ compartments on the initial day (January 24) of the observation period under these two assumptions. 

A noisy observation for the $I$ compartment on January 24 can be obtained by dividing the number of patients hospitalized on that day by the hospitalization rate of the Omicron variant \citep{zhao2023comparative}. The trend of $I$ observations on the few days before January 24 is relatively flat, which aligns well with the history function assumed for the model.
Furthermore, based on our assumptions, a noisy observation for the $D$ compartment on January 24 is obtained by the cumulative sum of the daily death count from January 7 through January 24.

To obtain noisy observations of $S$ and $R$ on January 24, we need estimates for all four compartment sizes at the beginning of the Omicron wave (January 6). 
First, we obtain a noisy observation for $I$ on January 6 using the aforementioned hospitalization rate method. Second, based on our two assumptions we set $R$ and $D$ to be zero on January 6, and so $S$ on January 6 is estimated by subtracting $I$ from the total population. 
Third, a noisy observation of $S$ on January 24 is obtained by taking the estimated $S$ on January 6 and subtracting the cumulative case counts from January 6 to January 24. Last, we subtract these estimates of $S$, $I$, and $D$ on January 24 from the total population to obtain a noisy observation of $R$ on January 24.

\end{document}